\DeclareMathAlphabet\mathbfcal{OMS}{cmsy}{b}{n}
\newcommand{\ket}[1]{\ensuremath{|#1\rangle}}
\newcommand{\bra}[1]{\ensuremath{\langle #1|}}
\newcommand{\braket}[2]{\langle #1|#2\rangle}
\newcommand{\proj}[1]{\ket{#1}\bra{#1}}
\newcommand{\be}{\begin{equation}}
\newcommand{\ee}{\end{equation}}
\newcommand{\ba}{\begin{eqnarray}}
\newcommand{\ea}{\end{eqnarray}}
\newcommand{\tr}{{\rm tr}}
\newcommand{\Tr}{{\rm Tr}}
\newcommand{\id}{\mathbb{I}}
\newcommand{\wt}[1]{\widetilde{#1}}
\newcommand{\mE}{\mathcal{E}}
\newtheorem{theorem}{Theorem}
\newtheorem{alemma}{Lemma}[section]
\newtheorem{aproposition}[alemma]{Proposition}
\newtheorem{atheorem}[alemma]{Theorem}
\newtheorem{acorollary}[alemma]{Corollary}
\newtheorem{question}{Question}
\newtheorem{definition}{Definition}
\newcommand{\CY}[1]{{\color{black}#1}}
\newcommand{\CYtwo}[1]{{\color{black}#1}}
\newcommand{\toni}[1]{{\color{black}#1}}
\newcommand{\CYthree}[1]{{\color{black}#1}}
\newcommand{\HXS}{H^{\rm (XS')}_{\rm X|X'}}
\newcommand{\HXSd}{H^{\rm (XS'),\dagger}_{\rm X|X'}}
\newcommand{\ZXS}{Z^{\rm (XS')}_{\rm X|X'}}
\definecolor{nred}{rgb}{0.9,0.1,0.1}
\definecolor{nblack}{rgb}{0,0,0} 
\definecolor{nblue}{rgb}{0.2,0.2,0.8}
\definecolor{ngreen}{rgb}{0.2,0.6,0.2}
\definecolor{ublue}{rgb}{0,0,0.5}
\definecolor{pur}{rgb}{0.75,0,0.75}
\definecolor{nngrn}{rgb}{0,0.5,0.5}
\definecolor{CitingColor}{rgb}{0,0.3,1}
\newcommand{\blu}{\color{nblue}}
\begin{document}
\title{Quantum Channel Marginal Problem}

\author{Chung-Yun Hsieh}
\email{chung-yun.hsieh@icfo.eu}
\affiliation{ICFO - Institut de Ciencies Fotoniques, The Barcelona Institute of Science and Technology, 08860 Castelldefels, Spain}

\author{Matteo Lostaglio}
\affiliation{ICFO - Institut de Ciencies Fotoniques, The Barcelona Institute of Science and Technology, 08860 Castelldefels, Spain}
\affiliation{QuTech, Delft University of Technology, P.O. Box 5046, 2600 GA Delft, The Netherlands}
\affiliation{Korteweg-de Vries Institute for Mathematics and QuSoft, University of Amsterdam, The Netherlands}

\author{Antonio Ac\'in}
\affiliation{ICFO - Institut de Ciencies Fotoniques, The Barcelona Institute of Science and Technology, 08860 Castelldefels, Spain}
\affiliation{ICREA-Instituci\'o Catalana de Recerca i Estudis Avan\c{c}ats, Lluis Companys 23, 08010 Barcelona, Spain}

\date{\today}

\begin{abstract}
Given a set of local dynamics, are they compatible with a global dynamics?
We systematically formulate these questions as {\em quantum channel marginal problems}. 
These problems are strongly connected to the generalization of the no-signaling conditions to quantized inputs and outputs and can be understood as a general toolkit to study notions of quantum incompatibility. 
In fact, they include as special cases channel broadcasting, channel extendibility, measurement compatibility and state marginal problems.
After defining the notion of compatibility between global and local dynamics, we provide a solution to the channel marginal problem that takes the form of a semidefinite program.
Using this formulation, we construct channel incompatibility witnesses, discuss their operational interpretation in terms of an advantage for a state-discrimination task, prove a gap between classical and quantum dynamical marginal problems and show that the latter is irreducible to state marginal problems.
\end{abstract}

\maketitle

{\em Introduction.}--
A fundamental question in quantum mechanics is whether a given set of local states are compatible with a global one.
In other words, can the former be seen as the {\em marginals} of a global quantum state?
This kind of questions are known as {\em state marginal problems} (SMPs). 
One of the most prominent examples is the $2$-body $N$-representability problem, where one asks which $2$-body reduced density matrices can result as the marginals of a global state of $N$ particles, a problem motivated by the calculation of ground states of 2-body, usually local, Hamiltonians, see for instance~\cite{Schilling,bookchapter}. 
Because of its relevance, the SMP has been studied from many different viewpoints, for instance in the context of entanglement~\cite{tothguhne,entmarginal} or Bell non-locality detection~\cite{tura,navascues}, or by constructing efficient measurement strategies for the estimation of marginal states~\cite{tomography1,tomography2,tomography3}. 
The SMP also has a classical analogue, where incompatibility appears in the form of frustration due to loops~\cite{Wolf2003}.

Whereas the SMP is concerned with static properties encoded in states, the purpose of this work is to understand how compatibility between local and global descriptions extends at the level of dynamics. As we shall see, for the concept of local dynamics to be well-defined, the channels under consideration have to satisfy a generalization of the no-signaling condition to quantum input and outputs considered in Refs.~\cite{Beckman2001,Eggeling2002,Piani2006,Duan2016,Wang2017,Hoban2018}.
First, however, we introduce the {\em channel marginal problems} (CMPs) as a natural {\em dynamical} generalization of the SMP. 
After this, we provide necessary and sufficient conditions to solve it that can be addressed using semi-definite programming. 
There exist \CY{several} previous works that have also considered the CMP~\cite{Girard2020,Haapasalo2014,Allen2017,Haapasalo2019,Viola2015,Kaur2019}, and we shall see that our framework includes those (and more) within a unified umbrella.

It is convenient for what follows to recall the definition of the SMP, see also Figure~\ref{Fig}(a).
\begin{definition}\label{Def:SMP}
{\rm (State Marginal Problem)} Consider a global system ${\rm S}$ and a set of local states $\{\rho_{\rm X}\}_{\rm X\in\Lambda}$, where $\Lambda$ is a collection of subsystems ${\rm X}$ of ${\rm S}$ and each $\rho_{\rm X}$ is in ${\rm X}$.
Then a SMP asks whether there exists a global state $\rho_{\rm S}$ in ${\rm S}$ compatible with all of them, that is, 
\begin{equation}
\label{eq:SMP}
\exists\; \rho_{\rm S}\geq 0 \;\text{   such that   }\;{\rm tr}_{\rm S\setminus X}(\rho_{\rm S}) = \rho_{\rm X},\, \forall\; {\rm X}. 
\end{equation}
\end{definition}
SMPs always have a trivial solution if the marginals do not overlap, $\rho_{\rm S}=\bigotimes_{\rm X\in\Lambda} \rho_{\rm X}$. The problem becomes much more interesting in the case of overlapping regions. It is again easy to see that the problem has no solution when states $\rho_{\rm X}$ are picked at random. This is because, for the problem to be well-posed, the states $\rho_{\rm X}$ must be compatible in the overlapping regions, that is, their reduced state must be equal. Formally, ${\rm tr}_{\rm X\setminus(X\cap Y)}(\rho_{\rm X}) = {\rm tr}_{\rm Y\setminus(X\cap Y)}(\rho_{\rm Y})\;\forall\;{\rm X,Y\in\Lambda}$. These are sometimes called local compatibility conditions, which are necessary and easy to verify but, unfortunately, not sufficient. Still, Eq.~\eqref{eq:SMP} is nothing but a set of linear equations on positive operators that can be solved using semi-definite programming (SDP), a standard technique in convex optimization. Note however that the size of the SDP scales exponentially with the number of systems.  In fact, the SMP is not expected to have a scalable solution as it is known to be QMA-complete~\cite{QMA}.

{\em Channel Marginal Problems.}--
To formalize the dynamical version of SMPs, first recall that channels correspond to the most general linear operation on quantum states and are represented by completely positive trace-preserving (CPTP) maps~\cite{QCIbook}, denoted by $\mE$. 
One is then tempted to define the CMP as follows:  given a set of local channels $\{\mE_{\rm X}\}_{\rm X\in\Lambda}$, where $\Lambda$ is a collection of subsystems ${\rm X}$ of ${\rm S}$ and each $\mE_{\rm X}$ is a channel in ${\rm X}$, the CMP asks whether there exists a global channel $\mE_{\rm S}$ compatible with  all of them. The formalization of this compatibility condition, however, is not obvious because the concept of {\em marginals} for channels is not as straightforward as that for states.

In fact, being an input-output process, the existence of well-defined marginals for a dynamical map requires certain no-signaling conditions. 
Consider the classical case first. 
A dynamical map from the the inputs ${\rm S'}=\{s'_i\}$ to the outputs ${\rm S}=\{s_j\}$ is defined by a stochastic matrix $P_{\rm S|S'}$~\cite{footnote:ClassicalChannelAsQuantumChannel}.
Given ${\rm X' \subset S'}$ and ${\rm X \subset S}$, $P_{\rm S|S'}$ has a well-defined marginal from ${\rm X'}$ to ${\rm X}$, or simply a reduced map $P_{\rm X|X'}$, if and only if for every input probability distribution $p_{\rm S'}$ we have
\begin{equation}\label{eq:classicalmarginal}
P_{\rm X|X'} \sum_{s'_i \in {\rm S'} \backslash {\rm X'}} p_{\rm S'} =  \sum_{s_i \in {\rm S} \setminus {\rm X}} P_{\rm S|S'}p_{\rm S'},
\end{equation}
where $P_{\rm B|A} p_{\rm A} = \sum_{a_1, ..., a_k} P({b_1, ..., b_k| a_1, ..., a_k})p_{a_1, ..., a_k}$. 
This is equivalent to the usual condition that the map $P_{\rm S|S'}$ is no-signaling from ${\rm S'} \setminus {\rm X'}$ to ${\rm X}$; namely, $\sum_{s_i \in {\rm S} \setminus {\rm X}} P_{\rm S|S'}$ is independent of the input ${\rm S'\setminus X'}$. 
\toni{For instance, in the standard bipartite scenario, where local inputs by Alice and Bob are now labelled by ${\rm X}$ and ${\rm Y}$ and their outputs by ${\rm A}$ and ${\rm B}$, a global map $P_{\rm AB|XY}$ has well-defined marginals $P_{\rm A|X}$ and $P_{\rm B|Y}$ if it satisfies the standard no-signaling conditions.}
Moving now to the quantum domain, the condition in Eq.~\eqref{eq:classicalmarginal} admits a straightforward quantum generalization. 
Denoting by $\mathcal{E}_{\rm A|A'}$ a channel from ${\rm A'}$ to ${\rm A}$ and using ${\rm A|A'}$ to represent this particular output-input pair, we have:

\begin{figure}
\scalebox{0.8}{\includegraphics{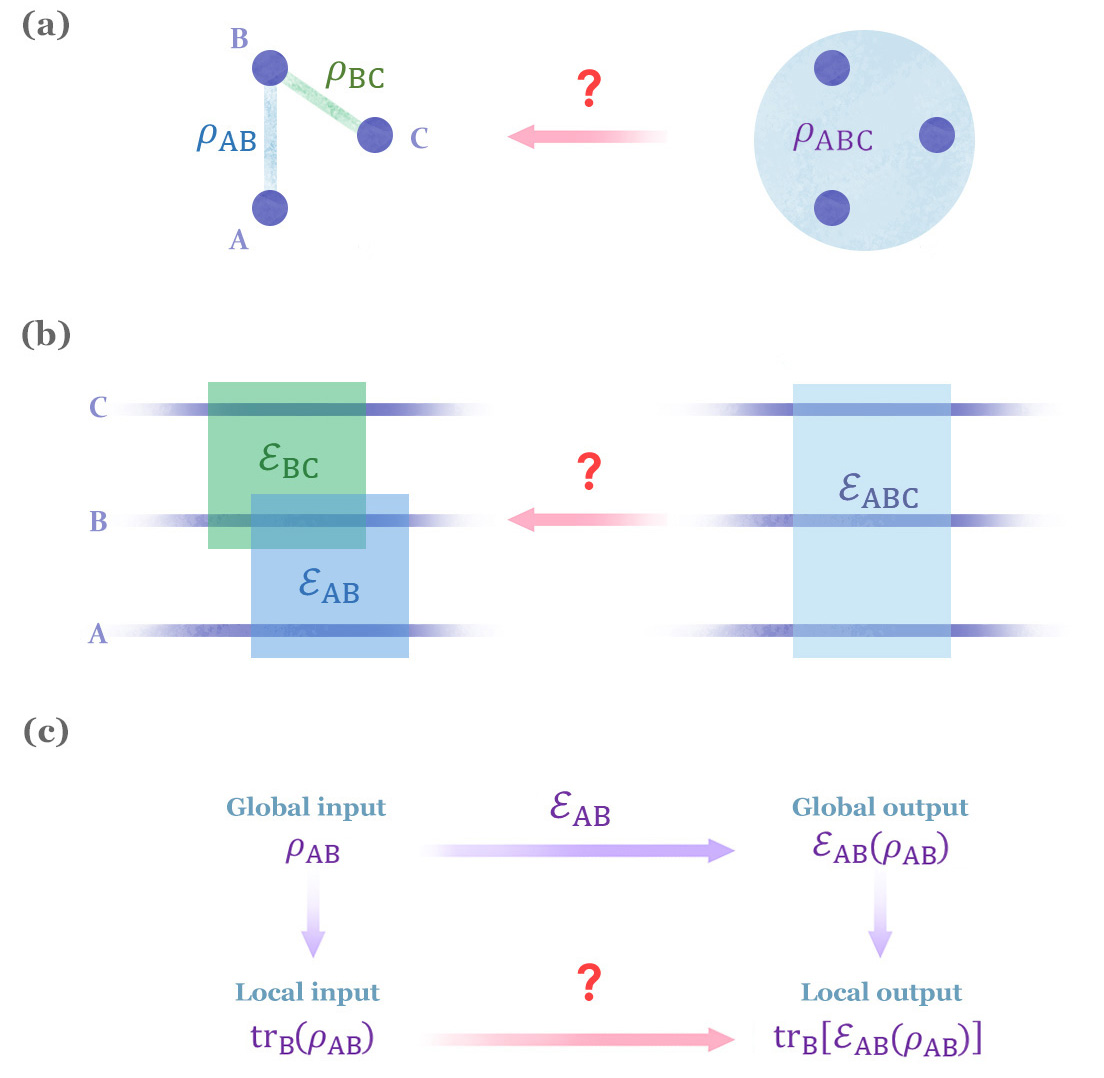}}
\caption{
(a) A {\em state marginal problem} is asking whether, for a given set of local states (e.g., $\rho_{\rm AB},\rho_{\rm BC}$), there is a global state ($\rho_{\rm ABC}$) that has them as its marginal states. 
(b) As a dynamical generalization of the state marginal problem, a {\em channel marginal problem} is asking whether, for a given set of local channels (e.g., $\mE_{\rm AB},\mE_{\rm BC}$, \CY{where $\mE_{\rm X} = \mE_{\rm X|X'}$ when ${\rm X=X'}$}), there is a global channel ($\mE_{\rm ABC}$) that has them as its marginal channels. 
(c) Commutation diagram used to define the {\em marginal channel} in a subsystem \CY{(Definition~\ref{Def:Reduced})}.
For a local input coming from part of a global input, ${\rm tr}_{\rm B}(\rho_{\rm AB})$, the marginal channel $\CY{{\rm Tr}_{\rm B|B}}(\mE_{\rm AB})$ \CY{of a global channel $\mE_{\rm AB}$} should map it to the marginal state of the global output, ${\rm tr}_{\rm B}[\mE_{\rm AB}(\rho_{\rm AB})]$.
Unlike the case for states, the marginal of a channel is not always well-defined.
}
\label{Fig} 
\end{figure}

\begin{definition}\label{Def:Reduced}
{\rm (Marginal Channels)} Given a global channel \CY{$\mE_{\rm S|S'}$} and \CY{subsystems ${\rm X'}\subseteq{\rm S'}, {\rm X}\subseteq{\rm S}$}, \CY{$\mE_{\rm S|S'}$} is said to have a well-defined marginal \CY{from ${\rm X'}$ to ${\rm X}$}, or reduced channel \CY{$\mE_{{\rm X|X'}}$}, if for every state $\rho_{\rm S'}$
\CY{\begin{align}\label{Eq:Def}
\mE_{\rm X|X'}\circ\tr_{{\rm S'}\setminus{\rm X'}}(\rho_{\rm S'}) = \tr_{{\rm S}\setminus{\rm X}}\circ\mE_{\rm S|S'}(\rho_{\rm S'}).
\end{align}
If this is the case, we denote the marginal channel by $\Tr_{{\rm S}\setminus{\rm X}|{\rm S'}\setminus{\rm X'}}(\mE_{\rm S|S'})\coloneqq\mE_{\rm X|X'}$.}
\end{definition}
\CY{
Channels satisfying this condition are known in the literature as \emph{semi-causal channels} in ${\rm X}$ with respect to ${\rm S'}\setminus{\rm X'}$ or {\em no-signaling} from ${\rm S'} \setminus {\rm X'}$ to ${\rm X}$~\cite{Beckman2001,Eggeling2002,Piani2006,Duan2016}.
Comparing Eq.~\eqref{Eq:Def} with \eqref{eq:classicalmarginal}, these channels can be seen as the analogue of no-signaling correlations once we quantize inputs and outputs (in fact, after dephasing inputs and outputs we recover the definition of no-signaling correlations). 
Operationally, these channels were also proven to be equivalent to those quantum dynamics that can be realized with one-way communication from ${\rm X'}$ to ${\rm S \setminus X}$~\cite{Eggeling2002} (also known as {\em semi-localizable} in ${\rm X}$). 
Unsurprisingly, this class of channels plays a critical role in the study of the quantum generalization of causal models~\cite{Allen2017}. What we highlight here is that they are also at the core of the dynamical generalization of the SMP, which can be written in full analogy to Definition~\ref{Def:SMP}:}
\begin{definition}\label{Def:CMP}
{\rm (Channel Marginal Problem)} Consider global systems \CY{${\rm S', S}$} and a set of local channels \CY{$\{\mE_{\rm X|X'}\}_{\rm X|X'\in\Lambda}$}, \CY{where $\Lambda\coloneqq\{{\rm X|X'}\}$ is a collection of \CYtwo{output-input pairs} with ${\rm X'\subseteq S', X\subseteq S}$}.
Then a CMP asks whether there exists a global channel \CY{$\mE_{\rm S|S'}$} compatible with all of them, 
\begin{equation}
\label{Eq:CMP}
\exists \text{ channel }\CY{\mE_{\rm S|S'}} \;\text{s.t.}\;\CY{\Tr_{\rm S\setminus X|S'\setminus X'}(\mE_{\rm S|S'}) = \mE_{\rm X|X'}}\, \forall\; \CY{\rm X|X'}. 
\end{equation}
\end{definition}

With this definition, the analogies between the CMP and SMP are clear. 
As for states, when the subsets ${\rm X}$'s \CY{and ${\rm X'}$}'s are non-overlapping, the CMP has a trivial solution, namely, $\CY{\bigotimes_{\rm X|X'\in\Lambda}\mE_{{\rm X|X'}}}$. 
When overlapping marginals are considered, we again need to first verify whether the problem is well-posed and the overlapping channels coincide in the common region: 
\CY{ For every ${\rm X|X',Y|Y'}\in\Lambda$
we need,}
$
\CY{\Tr_{\rm X\setminus Y|X'\setminus Y'}(\mE_{\rm X|X'}) = \Tr_{\rm Y\setminus X|Y'\setminus X'}(\mE_{\rm Y|Y'}).}
$
A set of channels \CY{$\{\mE_{{\rm X|X'}}\}_{\rm X|X'\in\Lambda}$ satisfying this condition is said to be {\em locally compatible}.}{
As for states, we will see below that these conditions are necessary but not sufficient.

Our next step is to see how condition~\eqref{Eq:CMP} can be tackled and, in particular, whether it can be written as a SDP. 
But before doing that, we \CY{showcase how the CMP includes as special cases several problems considered before in the quantum information and physics literature.}

\CY{{\em Special cases.--} First consider the case where all ${\rm X'}$ coincide with ${\rm S'}$; namely, $\Lambda = \{{\rm X|S'}\}$. 
The global channels $\mathcal{E}_{\rm S|S'}$ automatically have well-defined marginals $\mathcal{E}_{\rm X|S'}$, since the no-signaling condition in Definition~\ref{Def:Reduced} trivializes. 
The CMP then reduces to the question of the existence of a global channel $\mathcal{E}_{\rm S|S'}$ such that ${\rm tr}_{\rm S\setminus X}\circ\mE_{\rm S|S'} = \mE_{\rm X|S'}\;\forall\;{\rm X}$. 
This broadcasting notion of channel (in)compatibility has been studied extensively as a natural generalization of the notion of measurement (in)compatibility \cite{Haapasalo2014,Girard2020,Viola2015, Haapasalo2019,Haapasalo2015,Carmeli2019,Carmeli2019-2}.
When the channels under consideration are the identity, the non-existence of a global channel is the celebrated no-broadcasting theorem~\cite{Heinosaari2016}.

Next, consider the case where ${\rm X'}_i= {\rm X}_i = {\rm A B}_i$, where all ${\rm B}_i$'s are isomorphic. 
Given identical channels $\mathcal{E}_{{\rm AB}_i}$, the CMP asks whether there exists an extension to a global channel $\mathcal{E}_{{\rm A B}_1 \dots {\rm B}_n}$. 
This notion of channel extendibility was recently introduced as a natural generalization of the notion of state extendibility and used in the context of quantum communication scenarios~\cite{Kaur2019} and testing symmetries on a quantum computer~\cite{LaBorde2021}.}

{\em Solving the CMP.}-- In what follows, and for the ease of notation, we often denote the vector of local channels defining a given CMP by $\mathbfcal{E}\coloneqq\CY{\{\mE_{{\rm X|X'}}\}_{\rm X|X'\in\Lambda}}$.
We say that $\mathbfcal{E}$ is {\em compatible} whenever the CMP has a solution, namely there exists a global channel $\CY{\mE_{\rm S|S'}}$ compatible with each $\CY{\mE_{{\rm X|X'}}}$. 
The set of compatible local channels is denoted by $\mathfrak{C}$.
As a first step to solve the CMPs, we consider the Choi-Jamio\l kowski isomorphism between channels and bipartite states. The {\em Choi state} of a channel $\CY{\mE_{\rm X|X'}}$ is defined by~\cite{Choi1975,Jamiolkowski1972}
\begin{align}
\mE_{\rm XX'}^\mathcal{J}\coloneqq \CY{(\mE_{\rm X|X'}\otimes\mathcal{I}_{\rm X'})(\proj{\Psi_{\rm X'X'}^+}),}
\end{align}
\CY{where $\ket{\Psi_{\rm X'X'}^+}\coloneqq\frac{1}{\sqrt{d_{\rm X'}}}\sum_{i=0}^{d_{\rm X'}-1}\ket{ii}$ is maximally entangled in ${\rm X'X'}$, which is a bipartite system consisting of two copies of ${\rm X'}$.}
Note that, since channels are trace-preserving, $\tr_{\rm X}(\mE_{\rm XX'}^\mathcal{J})=\frac{\id_{\rm X'}}{d_{\rm X'}}$.
It has been shown~\cite{Beckman2001,Eggeling2002,Piani2006,Duan2016,Wang2017,Hoban2018} that the condition~\eqref{Eq:Def} used to define a marginal channel \CY{$\mE_{\rm X|X'}$} of a global channel \CY{$\mE_{\rm S|S'}$} can equivalently be expressed in terms of their Choi states as
\begin{align}\label{Eq:Characterization}
\tr_{\rm S\setminus X}\left(\mE_{\rm SS'}^\mathcal{J}\right) = \mE_{\rm XX'}^\mathcal{J}\otimes\frac{\id_{\rm S'\setminus X'}}{d_{\rm S'\setminus X'}},
\end{align}
which we prove for completeness in Appendix~\ref{App:cmp}.

It is now straightforward to present an equivalent formulation of the CMP in term of the Choi states: a set of local channels $\mathbfcal{E}$ is compatible if, and only if,
\begin{equation}\label{Eq:CMPSDP}
\exists\;\mE_{\rm SS'}^\mathcal{J}\geq 0\;\;
{\rm s.t.}\;\;\tr_{\rm S\setminus X}\left(\mE_{\rm SS'}^\mathcal{J}\right) = \mE_{\rm XX'}^\mathcal{J}\otimes\frac{\id_{\rm S'\setminus X'}}{d_{\rm S'\setminus X'}}\;\forall\;\CY{\rm X|X'}.
\end{equation}
Note that there is no need to impose that $\mE_{\rm SS'}^\mathcal{J}$ is a Choi state, as the condition $\tr_{\rm S}(\mE_{\rm SS'}^\mathcal{J})=\frac{\id_{\rm S'}}{d_{\rm S'}}$ follows from the equalities in the previous formulation. 
Hence, the CMP can be rephrased via the Choi-Jamio\l kowski isomorphism as a particular SMP with overlapping marginals, but it requires an additional tensor product structure \CY{taking into account the quantum non-signaling constraints associated to the dynamical problem}. 
\CY{Eq.~\eqref{Eq:CMPSDP} involves a set of linear conditions over positive operators and can thus be solved using SDP, but it is not equivalent to the SMP for the Choi states.
However in the special case of broadcasting compatibility, where all ${\rm X'}$ coincide with the input global system ${\rm S'}$, the `tensor identity' parts in Eq.~\eqref{Eq:CMPSDP} disappear. 
In this special case the CMP reduces to a SMP for the Choi states, recovering the result of Ref.~\cite{Haapasalo2019}.}

\CY{{\em Channel Incompatibility Robustness.--}}
In last years, there has been an intense effort to provide operational measures to problems defined through SDP or conic programs, see for example~\cite{Hall2007,Rosset2018,Carmeli2019-2,Carmeli2019,Uola2020,Takagi2019-1,Uola2019,Skrzypczyk2019,Mori2020,Yuan2019}. 
The CMP defined in Eq.~\eqref{Eq:CMPSDP} is another example of these problems and we can therefore apply to it techniques similar to those presented in the previous references. We first present a robustness quantity, dubbed {\em incompatibility robustness}, which gives an efficient solution to the CMPs while also providing a quantitative measure of incompatibility:
\begin{align}\label{Eq:RI}
R(\mathbfcal{E})\coloneqq \max\{\lambda\in[0,1]\,|\,\lambda\mathbfcal{E} + (1-\lambda)\mathbfcal{N}\in\mathfrak{C}\},
\end{align}
where the maximization is taken over vectors of local channels \CY{$\mathbfcal{N} = \{\mathcal{N}_{\rm X|X'}\}_{\rm X|X'\in\Lambda}$.}
The linear combination of $\mathbfcal{E},\mathbfcal{N}$ is defined component-wise, that is as \CY{$a\mathbfcal{E} + b\mathbfcal{N}\coloneqq\{a\mE_{\rm X|X'} + b\mathcal{N}_{\rm X|X'}\}_{\rm X|X'\in\Lambda}$.}
\CY{$R(\mathbfcal{E})$ is the solution of the SDP (see Appendix~\ref{App:DualWitness})}
\begin{eqnarray}\label{Eq:Result:Conjecture2}
\begin{aligned}
\max_{\rho_{\rm SS'},\lambda}\quad&\lambda\\
{\rm s.t.}\quad&\rho_{\rm SS'}\ge0,\;\;\tr_{{\rm S}}(\rho_{\rm SS'}) = \frac{\id_{\rm S'}}{d_{\rm S'}},\;\;\lambda\in[0,1],\;\&\;\forall\;{\rm \CY{X|X'}}\\
&\tr_{{\rm S}\setminus{\rm X}}(\rho_{\rm SS'})\ge\lambda \mE_{{\rm X}{\rm X}'}^\mathcal{J}\otimes\frac{\id_{{\rm S}'\setminus{\rm X}'}}{d_{{\rm S}'\setminus{\rm X}'}};\\
&\tr_{{\rm S}\setminus{\rm X}}(\rho_{\rm SS'}) = \tr_{{\rm SS'}\setminus{\rm XX'}}(\rho_{\rm SS'})\otimes\frac{\id_{{\rm S}'\setminus{\rm X}'}}{d_{{\rm S}'\setminus{\rm X}'}}.
\end{aligned}
\end{eqnarray}
Eq.~\eqref{Eq:Result:Conjecture2} provides a general, quantitative and numerically feasible strategy to tackle CMPs.
Its solution $\lambda$ serves as a measure of incompatibility: $R(\mathbfcal{E})=1$ if and only if $\mathbfcal{E}\in\mathfrak{C}$, meaning that the CMP for $\mathbfcal{E}$ admits a solution. 
A value $R<1$ can detect instances in which the local compatibility of channels \CY{$\mE_{\rm X|X'}$} is not sufficient for the existence of the global channel \CY{$\mE_{\rm S|S'}$}~\cite{footnote:LC}. 
The solution to the SDP also returns the global physical process that best approximates the marginal channels~$\mathbfcal{E}$.
 \CY{In the broadcasting scenario (${\rm X'=S'}$), the incompatibility robustness recovers as a special case the `consistent robustness' of \cite{Haapasalo2019} (and the results of Theorem~1 of the same manuscript).}

The dual of the SDP~\eqref{Eq:Result:Conjecture2} provides a simple operational interpretation, helping us to single out the physical role played by incompatibility.
To proceed, let $\mathbfcal{E}^\mathcal{J}\coloneqq\{\mE^\mathcal{J}_{\rm XX'}\}_{\rm \CY{X|X'\in\Lambda}}$ be the vector of Choi states of the input channels $\mathbfcal{E}$. For a set of operators \CY{${\bf A}\coloneqq\{A_{\rm X|X'}\}_{\rm X|X'\in\Lambda}$in ${\rm XX'}$}, define $\left\langle{\bf A},\mathbfcal{E}^\mathcal{J}\right\rangle\coloneqq\sum_{\rm \CY{\rm X|X'\in\Lambda}}{\rm tr}\left(\CY{A_{\rm X|X'}}\mE^\mathcal{J}_{\rm XX'}\right)$.
Then we prove that $\mathbfcal{E}$ is incompatible if and only if there exists a set of positive operators \CY{${\bf H}\coloneqq\{H_{\rm X|X'}\}_{\rm X|X'\in\Lambda}$ \CY{in ${\rm XX'}$}} such that (see Appendix~\ref{App:DualWitness})
\begin{align}\label{Eq:Result:CompatibilityProblemCondition}
\left\langle{\bf H},\mathbfcal{E}^\mathcal{J}\right\rangle > \max_{\mathbfcal{L}\in\mathfrak{C}}\left\langle{\bf H},\mathbfcal{L}^\mathcal{J}\right\rangle.
\end{align}
This result gives a witness form for channel incompatibility in terms of Choi states of local channels, serving as the dynamical version of the state incompatibility witness given by Ref.~\cite{Hall2007}.
However, intuitively one expects that it is possible to define witnesses in terms of channels, rather than Choi states.
Using Proposition 7 in Ref.~\cite{Rosset2018}, in Appendix~\ref{App:ProofsMainResults} we prove the following characterization of incompatibility:
\begin{theorem}\label{Coro}
{\rm (Channel Incompatibility Witness)} $\mathbfcal{E}$ is incompatible if and only if for every ${\rm X|X'\in\Lambda}$ there exist 
Hermitian operators $\{H^{(i)}_{{\rm X|X'}}\}_{i=1}^{N_{\Lambda}}$ in ${\rm X}$ and states $\{\rho^{(i)}_{{\rm X|X'}}\}_{i=1}^{N_{\Lambda}}$ in ${\rm X'}$ such that, with $N_{\Lambda} \coloneqq \left(\max_{\rm X|X'\in\Lambda}\{d_{\rm X},d_{\rm X'}\}\right)^2 +3$,
\begin{align}
&\sum_{{\rm X|X'},i}{\rm tr}\left[H^{(i)}_{{\rm X|X'}}\mE_{\rm X|X'}\left(\rho^{(i)}_{{\rm X|X'}}\right)\right]\nonumber\\
& > \max_{\mathbfcal{L}\in\mathfrak{C}}\sum_{{{\rm X|X'},i}}{\rm tr}\left[H^{(i)}_{{\rm X|X'}}\mathcal{L}_{\rm X|X'}\left(\rho^{(i)}_{{\rm X|X'}}\right)\right].
\end{align}
\end{theorem}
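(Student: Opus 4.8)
The plan is to transfer the Choi-state witness of Eq.~\eqref{Eq:Result:CompatibilityProblemCondition} to the level of channels by writing its witness operators in a product form, invoking the decomposition result of Ref.~\cite{Rosset2018}. The bridge between the two pictures is an elementary identity: expanding $\ket{\Psi^+_{\rm X'X'}}$ in the Schmidt basis used to define it, one checks that for any channel $\mathcal{L}_{\rm X|X'}$, any Hermitian $H$ in ${\rm X}$ and any state $\rho$ in ${\rm X'}$, $\mathrm{tr}[(H\otimes\rho^{T})\,\mathcal{L}_{\rm XX'}^{\mathcal J}]=d_{\rm X'}^{-1}\mathrm{tr}[H\,\mathcal{L}_{\rm X|X'}(\rho)]$, where the transpose is taken in that basis. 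Hence, if a vector of witness operators has the special form ${\bf H}=\{H_{\rm X|X'}\}$ with $H_{\rm X|X'}=d_{\rm X'}\sum_{i}H^{(i)}_{\rm X|X'}\otimes(\rho^{(i)}_{\rm X|X'})^{T}$, then $\langle{\bf H},\mathbfcal{L}^{\mathcal J}\rangle=\sum_{{\rm X|X'},i}\mathrm{tr}[H^{(i)}_{\rm X|X'}\mathcal{L}_{\rm X|X'}(\rho^{(i)}_{\rm X|X'})]$ for every $\mathbfcal{L}$, so the channel-form inequality of the theorem is precisely the Choi-form inequality $\langle{\bf H},\mathbfcal{E}^{\mathcal J}\rangle>\max_{\mathbfcal{L}\in\mathfrak{C}}\langle{\bf H},\mathbfcal{L}^{\mathcal J}\rangle$ restricted to such product witnesses (the two maxima coincide because $\mathbfcal{L}\mapsto\mathbfcal{L}^{\mathcal J}$ is linear and maps $\mathfrak{C}$ onto the compact convex set of Choi states of compatible vectors).

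For the ``if'' direction I would take a channel-form witness, build the Hermitian operators $H_{\rm X|X'}$ as above, and restore positivity by the shift $H_{\rm X|X'}\mapsto H_{\rm X|X'}+c_{\rm X|X'}\id_{\rm XX'}$ with $c_{\rm X|X'}$ large enough: since $\mathrm{tr}[\id_{\rm XX'}\mathcal{L}^{\mathcal J}_{\rm XX'}]=1$ for every channel, this changes both sides of the inequality by the same constant and makes each operator positive, so Eq.~\eqref{Eq:Result:CompatibilityProblemCondition} certifies that $\mathbfcal{E}$ is incompatible. For the ``only if'' direction, if $\mathbfcal{E}$ is incompatible then Eq.~\eqref{Eq:Result:CompatibilityProblemCondition} gives a positive witness ${\bf H}=\{H_{\rm X|X'}\}$, and it remains to rewrite each positive operator $H_{\rm X|X'}$ on ${\rm XX'}$ as $d_{\rm X'}\sum_{i=1}^{N_{\Lambda}}H^{(i)}_{\rm X|X'}\otimes(\rho^{(i)}_{\rm X|X'})^{T}$ with Hermitian $H^{(i)}$ and states $\rho^{(i)}$. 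Here I would apply Proposition 7 of Ref.~\cite{Rosset2018}, which decomposes an arbitrary bipartite Hermitian operator on a $d_{\rm X}\times d_{\rm X'}$ system into at most $(\max\{d_{\rm X},d_{\rm X'}\})^{2}+3$ summands of the form (density operator)$\otimes$(Hermitian operator); padding with zero summands raises the count to the uniform value $N_{\Lambda}$, which dominates the per-component bound because it uses the global maximum over $\Lambda$. Unwinding the bridge identity then produces exactly the channel-form witness.

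The quantitative heart of the statement---and the only place an external tool does real work---is this bounded product decomposition of a bipartite Hermitian operator with one tensor factor constrained to be a density matrix; I expect this to be the main obstacle, and it is discharged by citing Ref.~\cite{Rosset2018}. The remaining ingredients (the Choi-state identity with its transpose convention, the identity shift that converts a Hermitian separating functional into a positive one, and the observation that the compatible set is preserved under the Choi map) are routine. A minor but necessary point is that $N_{\Lambda}$ must be a single value valid for all output-input pairs simultaneously, which is why one takes the maximum of the local dimensions over the whole of $\Lambda$ and pads the shorter decompositions.
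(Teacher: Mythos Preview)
Your proposal is correct and follows essentially the same route as the paper: reduce to the Choi-state witness of Eq.~\eqref{Eq:Result:CompatibilityProblemCondition}, then invoke Proposition~7 of Ref.~\cite{Rosset2018} to rewrite each bipartite witness operator in the product form (state)$\otimes$(Hermitian), convert to the channel picture via the identity $\mathrm{tr}[(H\otimes\rho^{T})\mathcal{L}_{\rm XX'}^{\mathcal J}]=d_{\rm X'}^{-1}\mathrm{tr}[H\,\mathcal{L}_{\rm X|X'}(\rho)]$, and pad with zeros to a uniform~$N_\Lambda$. The only cosmetic differences are that the paper quotes Rosset's bound in its sharper form $(\min\{d_{\rm X},d_{\rm X'}\})^2+3$ before relaxing to the global maximum, and handles the ``if'' direction implicitly (it is immediate since a separating linear functional already certifies $\mathbfcal{E}\notin\mathfrak{C}$), whereas you route it through the identity shift back to Eq.~\eqref{Eq:Result:CompatibilityProblemCondition}; both are valid.
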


{\em Examples of incompatibility.}--
In analogy to states, there exist locally compatible channels for which there exists no global evolution compatible with them.
Consider channels ({\rm X = A,C})
$
\mathcal{M}_{\rm XB}(\cdot)\coloneqq{\rm CNOT}_{\rm XB}\left[\proj{0}_{\rm X}\otimes{\rm tr}_{\rm X}(\cdot)\right],
$
where ${\rm CNOT}_{\rm XB}:\ket{i}_{\rm X}\otimes\ket{j}_{\rm B}\mapsto\ket{(i+j)_{\rm mod\;2}}_{\rm X}\otimes\ket{j}_{\rm B}$ is a qubit CNOT-gate. 
$\mathcal{M}_{\rm AB}$ can be realized by Bob (i) implementing a CNOT between his incoming particle and an extra ancillary qubit prepared in state $\ket{0}$ (controlled on the former) and (ii) sending the ancillary qubit to ${\rm A}$, who uses it as output. 
This is incompatible with $\mathcal{M}_{\rm CB}$, as the SDP~\eqref{Eq:Result:Conjecture2} returns $R(\{\mathcal{M}_{\rm AB},\mathcal{M}_{\rm CB}\}) = 0.75$. 
The SDP also returns the optimal noise making the channels compatible:
$
\mathcal{N}_{\rm XB}(\cdot) = \frac{2}{3}\left(\mathcal{I}_{\rm X}\otimes\mathcal{D}_{\rm B}\right)\circ{\rm CNOT}_{\rm XB}\left[\proj{1}_{\rm X}\otimes{\rm tr}_{\rm X}(\cdot)\right]+\frac{1}{3}{\rm CNOT}_{\rm XB}^-\left[\proj{0}_{\rm X}\otimes{\rm tr}_{\rm X}(\cdot)\right],
$
where ${\rm CNOT}_{\rm XB}^-:\ket{i}_{\rm X}\otimes\ket{j}_{\rm B}\mapsto(-1)^j\ket{(i+j)_{\rm mod\;2}}_{\rm X}\otimes\ket{j}_{\rm B}$ and $\mathcal{D}(\cdot)\coloneqq\sum_{i=0,1}\proj{i}\cdot\proj{i}$. 
In fact, this is the noise generated by a physical process involving the best universal cloning machine allowed by quantum mechanics~\cite{Cloning-RMP}, which corresponds to the following protocol: After performing the CNOT between ${\rm X}$ and B, Bob sends ${\rm X}$ into an optimal cloning machine. 
He then sends one (imperfect) clone to Alice and one to Charlie, who use them as outputs. 
See Appendix~\ref{App:Example} for details.

{\em CMP is irreducible to SMP.--} One may conjecture that the compatibility of channels $\mathbfcal{E} = \{ \mE_{\rm X|X'}\}$ can be reduced to the state compatibility of the image states $\{ \mE_{\rm X|X'}(\rho_{\rm X'})\}$ for every set of compatible inputs $\{\rho_{\rm X'}\}$. 
We disprove this conjecture by constructing a counterexample. 
Take the bipartite channel $\mathcal{K}_{\rm XB}(\cdot)\coloneqq{\rm SWAP}\left[\proj{0}_{\rm X}\otimes{\rm tr}_{\rm X}(\cdot)\right]$, where ${\rm SWAP}:\ket{ij}\mapsto\ket{ji}$ is the swap operation.
First, $\{\mathcal{K}_{\rm AB},\mathcal{K}_{\rm CB}\}$ is locally compatible since in ${\rm B}$ they are both $(\cdot)_{\rm B}\mapsto\proj{0}_{\rm B}$. 
This pair is nevertheless incompatible.
To see this, by contradiction suppose that there was a tripartite channel $\mathcal{K}_{\rm ABC}$ simultaneously realizing $\mathcal{K}_{\rm AB},\mathcal{K}_{\rm CB}$.
By considering all inputs of the form $\proj{\psi}_{\rm B}\otimes\proj{00}_{\rm AC}$, $\mathcal{K}_{\rm ABC}$ realizes $\ket{\psi}\mapsto\ket{\psi}\otimes\ket{\psi}$, in violation of the no-cloning theorem~\cite{Cloning-RMP,Wootters1982}.
Hence, $\mathcal{K}_{\rm ABC}$ does not exist.
Still, for every compatible input pair $\{\rho_{\rm AB},\rho_{\rm CB}\}$ with marginal in ${\rm B}$ as $\sigma = {\rm tr}_{\rm A}(\rho_{\rm AB}) = {\rm tr}_{\rm C}(\rho_{\rm BC})$, the image states under $\{\mathcal{K}_{\rm AB},\mathcal{K}_{\rm CB}\}$ read $\{\sigma_{\rm A}\otimes\proj{0}_{\rm B},\proj{0}_{\rm B}\otimes\sigma_{\rm C}\}$, which are compatible with $\sigma_{\rm A}\otimes\proj{0}_{\rm B}\otimes\sigma_{\rm C}$.
Hence, incompatibility of local channels cannot be always detected from incompatibility of their image states~\cite{footnote:EntMonogamy}.
See Appendix~\ref{App:CMP_Irreducible_to_SMP} for further remarks.

{\em Gap Between Classical and Quantum CMP.--} The previous example demonstrated the existence of non-trivial instances of quantum channel incompatibility in the AB/BC scenario. Remarkably, this incompatibility structure never occurs classically. 
In fact, given two classical channels  (i.e., stochastic matrices) $P_{\rm AB|XY}$, $P_{\rm BC|YZ}$ with a well-defined marginal $P_{\rm B|Y}$ in ${\rm B|Y}$, the CMP is always satisfied by the global classical channel
$
P_{\rm ABC|XYZ} = P_{\rm AB|XY}P_{\rm BC|YZ}/P_{\rm B|Y},
$
as one can show by taking the corresponding marginals~\cite{footnote:ClassicalChannels}.

Is the gap between classical and quantum CMP simply due to the fact that the latter is always trivial, that is, simply defined by local compatibility? 
The following example shows that this is not the case. 
Take a PR-box~\cite{Khalfi1985,Popescu1994} on AB: $P_{\rm AB|XY} = \frac{1}{2}$ if ${\rm A} \oplus {\rm B} = {\rm XY}$ and $0$ otherwise, where all random variables are bits. Similarly, define PR-boxes in ${\rm AC}$ and ${\rm BC}$. 
The marginals are well-defined and coincide on ${\rm A}$, ${\rm B}$ and ${\rm C}$. 
Now, is there a joint classical channel $P_{\rm ABC|XYZ}$ compatible with them? 
By contradiction, suppose it does. 
Note that all its $2$-party marginals are well-defined, so one can prove that $P_{\rm ABC|XYZ}$ must be a no-signaling distribution. 
But it is known that the PR-box cannot be shared without violating no-signaling condition~\cite{Barrett2005}, so $P_{\rm ABC|XYZ}$ cannot exist. 
This proves that the channel marginal problem is not trivial either, but it is structurally different from the quantum one. In the classical problem dynamical incompatibility is a consequence of loops, while quantumly these are not required. This mirrors at the dynamical level what happens with frustration in the classical SMP~\cite{Wolf2003}.

{\em Operational interpretations.}--
Finally, we can also relate the CMP to discrimination tasks. 
In fact, a direct application of the formalism of Ref.~\cite{Uola2020} allows interpreting the robustness $R(\mathbfcal{E})$ in terms of an input-output game. 
Here we are more interested in standard {\em ensemble state discrimination task}. 
Given $\Lambda$, consider a scenario which has an agent for each ${\rm X}$ and ${\rm X}'$. With  probability $p_{\rm X|X'}$ the output-input pair $\rm X|X'$ is announced and the agent in ${\rm X'}$ is given a set of states $\{\rho_{{\rm X|X'}}^{(i)}\}_i$ sent with probabilities $\{q_{{\rm X|X'}}^{(i)}\}_i$ (with $\sum_iq_{{\rm X|X'}}^{(i)}=1$). 
The agent in ${\rm X'}$ needs to send them to the agent in ${\rm X}$ via an arbitrary channel $\mE_{\rm X|X'}$. The agent in ${\rm X}$ receives the states and performs a discriminating measurement $M^{(i)}_{{\rm X|X'}}$ (formally, $M^{(i)}_{{\rm X|X'}}\ge0$ and $\sum_i M^{(i)}_{{\rm X|X'}}=\id_{\rm X}$), guessing state $i$ in case of outcome~$i$. 
Setting $D\coloneqq\left(\{p_{\rm X|X'}\},\{q^{(i)}_{{\rm X|X'}},\rho^{(i)}_{{\rm X|X'}}\},\{M^{(i)}_{{\rm X|X'}}\}\right)$, the corresponding success probability in the task reads
$
P(D,\mathbfcal{E})\coloneqq\sum_{\rm X|X'}p_{\rm X|X'}\sum_{i} q^{(i)}_{{\rm X|X'}}{\rm tr}\left[M^{(i)}_{{\rm X|X'}}\mE_{\rm X|X'}\left(\rho^{(i)}_{{\rm X|X'}}\right)\right].
$
To avoid trivial scenarios, take $D$ to be {\em strictly positive}, meaning \CY{$p_{\rm X|X'}>0, q^{(i)}_{{\rm X|X'}}>0, M^{(i)}_{{\rm X|X'}}>0\;\forall \,i\;\forall\,{\rm X|X'}$} \cite{footnote:strictly}.
Let $P_\mathfrak{C}(D)\coloneqq\max_{\mathbfcal{E}\in\mathfrak{C}}P(D,\mathbfcal{E})$ be the highest success probability achievable by compatible sets of channels in the discrimination task $D$~\cite{footnote:NSD}. 
Then incompatibility of channels is equivalent to an advantage in an ensemble state discrimination task (proof in Appendix~\ref{App:ProofsMainResults}):
\begin{theorem}\label{Result:EnsembleStateDiscrimination}
{\rm (Advantage in Discrimination Tasks)} $\mathbfcal{E}$ is incompatible if and only if there exists a strictly positive ensemble state discrimination task $D$ such that
$
P(D,\mathbfcal{E}) > P_\mathfrak{C}(D).
$
\end{theorem}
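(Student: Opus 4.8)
The plan is to prove both implications; the reverse one is immediate and the forward one is the substance.

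\emph{($\Leftarrow$).} If a strictly positive task $D$ with $P(D,\mathbfcal{E})>P_\mathfrak{C}(D)$ exists, then $\mathbfcal{E}\notin\mathfrak{C}$: were $\mathbfcal{E}\in\mathfrak{C}$, the very definition of $P_\mathfrak{C}(D)=\max_{\mathbfcal{L}\in\mathfrak{C}}P(D,\mathbfcal{L})$ would force $P(D,\mathbfcal{E})\le P_\mathfrak{C}(D)$. (The maximum is attained because $\mathfrak{C}$ is convex and compact, being the continuous image of the compact set of global channels under marginalization.)

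\emph{($\Rightarrow$).} Suppose $\mathbfcal{E}$ is incompatible. I would invoke the channel incompatibility witness of Theorem~\ref{Coro}: there are Hermitian $H^{(i)}_{\rm X|X'}$ and states $\rho^{(i)}_{\rm X|X'}$ with $W(\mathbfcal{E})>W^\ast\defeq\max_{\mathbfcal{L}\in\mathfrak{C}}W(\mathbfcal{L})$, where $W(\mathbfcal{L})\defeq\sum_{{\rm X|X'},i}\tr[H^{(i)}_{\rm X|X'}\mathcal{L}_{\rm X|X'}(\rho^{(i)}_{\rm X|X'})]$; set $\Delta\defeq W(\mathbfcal{E})-W^\ast>0$. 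The real work is to recast this abstract witness as an ensemble discrimination task, via three normalizations, each preserving the strict separation (exactly, or up to a controllable error). (i) Since channels are trace preserving, $\sum_{{\rm X|X'},i}\tr[\mathcal{L}_{\rm X|X'}(\rho^{(i)}_{\rm X|X'})]$ is channel-independent, so replacing $H^{(i)}_{\rm X|X'}\to M^{(i)}_{\rm X|X'}\defeq\gamma\,(H^{(i)}_{\rm X|X'}+c\,\id_{\rm X})$ merely shifts and rescales $W$; taking $c>\max_{{\rm X|X'},i}\norm{H^{(i)}_{\rm X|X'}}_\infty$ and $\gamma<1/\max_{{\rm X|X'},i}\norm{H^{(i)}_{\rm X|X'}+c\,\id_{\rm X}}_\infty$ yields $0<M^{(i)}_{\rm X|X'}<\id_{\rm X}$ while $\Delta$ is unchanged. (ii) Decomposing every $\rho^{(i)}_{\rm X|X'}$ into pure states and multiplying the corresponding $H^{(i)}_{\rm X|X'}$ by the weights changes no value of $W$, so WLOG every input is a pure state $\proj{\psi^{(i)}_{\rm X|X'}}$ (the degenerate case $d_{\rm X'}=1$ is treated separately below). (iii) I would perturb $\proj{\psi^{(i)}_{\rm X|X'}}$ to the indefinite operator $\hat\Xi^{(i)}_{\rm X|X'}\defeq\proj{\psi^{(i)}_{\rm X|X'}}-\epsilon\,\proj{\phi^{(i)}_{\rm X|X'}}$ with $\ket{\phi^{(i)}_{\rm X|X'}}\perp\ket{\psi^{(i)}_{\rm X|X'}}$ and $\epsilon>0$ small; since $\norm{M^{(i)}_{\rm X|X'}}_\infty<1$, the functional $W''(\mathbfcal{L})\defeq\sum_{{\rm X|X'},i}\tr[M^{(i)}_{\rm X|X'}\mathcal{L}_{\rm X|X'}(\hat\Xi^{(i)}_{\rm X|X'})]$ differs from the shifted $W$ by at most order $\epsilon$ uniformly in $\mathbfcal{L}$, so for $\epsilon$ small enough $W''(\mathbfcal{E})-\max_{\mathbfcal{L}\in\mathfrak{C}}W''(\mathbfcal{L})\ge\gamma\Delta/2>0$.

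Now I would build $D$, the key device being to pair each witness term with its complementary POVM outcome so as to produce the needed minus sign. For each ${\rm X|X'}$ with $n_{\rm X|X'}$ (pure-input) terms, use the $2n_{\rm X|X'}$ outcomes $(i,1),(i,2)$, the strictly positive POVM $M^{(i,1)}_{\rm X|X'}=M^{(i)}_{\rm X|X'}/n_{\rm X|X'}$ and $M^{(i,2)}_{\rm X|X'}=(\id_{\rm X}-M^{(i)}_{\rm X|X'})/n_{\rm X|X'}$, the states $\rho^{(i,1)}_{\rm X|X'}=\proj{\psi^{(i)}_{\rm X|X'}}$, $\rho^{(i,2)}_{\rm X|X'}=\proj{\phi^{(i)}_{\rm X|X'}}$, strictly positive priors $q^{(i,1)}_{\rm X|X'}\propto 1$, $q^{(i,2)}_{\rm X|X'}\propto\epsilon$ (normalized within each ${\rm X|X'}$), and the weight $p_{\rm X|X'}\propto n_{\rm X|X'}^2$. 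Using $\tr[(\id_{\rm X}-M^{(i)}_{\rm X|X'})\mathcal{L}_{\rm X|X'}(\rho^{(i,2)}_{\rm X|X'})]=1-\tr[M^{(i)}_{\rm X|X'}\mathcal{L}_{\rm X|X'}(\rho^{(i,2)}_{\rm X|X'})]$, a short computation collapses the two outcomes of each term into $\tr[M^{(i)}_{\rm X|X'}\mathcal{L}_{\rm X|X'}(\hat\Xi^{(i)}_{\rm X|X'})]$ plus a channel-independent constant, so $P(D,\mathbfcal{L})=a+b\,W''(\mathbfcal{L})$ with $a$ channel-independent and $b>0$ (the choice $p_{\rm X|X'}\propto n_{\rm X|X'}^2$ is exactly what makes $b$ uniform over ${\rm X|X'}$). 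Hence $P(D,\mathbfcal{E})-P_\mathfrak{C}(D)=b\,[W''(\mathbfcal{E})-\max_{\mathbfcal{L}\in\mathfrak{C}}W''(\mathbfcal{L})]\ge b\gamma\Delta/2>0$, which is the claim. When some ${\rm X'}$ is trivial there is no orthogonal $\ket{\phi^{(i)}_{\rm X|X'}}$, but then the witness block depends only on $S_{\rm X|X'}\defeq\sum_i H^{(i)}_{\rm X|X'}$, which one writes directly as a suitably shifted and rescaled weighted POVM on two outcomes using the input $1$; the rest goes through unchanged.

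The main obstacle is precisely this recasting: Theorem~\ref{Coro} gives a witness with the ``wrong signs'' — the $H^{(i)}_{\rm X|X'}$ are indefinite rather than positive, and the $\rho^{(i)}_{\rm X|X'}$ are positive rather than carrying the indefiniteness one would like — whereas a discrimination task offers only positive POVM elements summing to $\id_{\rm X}$ and positively weighted ensembles. The resolution combines (i) shifting the $H^{(i)}_{\rm X|X'}$ positive at the cost of a channel-independent constant, (ii) the complementary-outcome trick, which buys a minus sign in front of an \emph{input} operator, which may therefore be taken indefinite, and (iii) passing to pure inputs so that the indefinite perturbation is operator-norm small and hence harmless. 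The bookkeeping to be checked is only that each manipulation preserves $W$ exactly or perturbs it by less than $\Delta/2$, and that all $p_{\rm X|X'}$, $q^{(i,j)}_{\rm X|X'}$ and $M^{(i,j)}_{\rm X|X'}$ stay strictly positive.
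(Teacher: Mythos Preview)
Your proposal is correct, but the paper's route is considerably simpler. Both arguments start from Theorem~\ref{Coro} and shift-and-rescale the Hermitian witnesses to positive operators using trace preservation. The divergence is in how the resulting positive operators are completed to a POVM. The paper rescales more aggressively so that $\sum_i Z^{(i)}_{\rm X|X'}<\id_{\rm X}$ (not just each term), and then appends a \emph{single} extra outcome $M^{(N+1)}_{\rm X|X'}=\id_{\rm X}-\sum_i Z^{(i)}_{\rm X|X'}$ paired with an \emph{arbitrary} state $\eta_{\rm X|X'}$ occurring with prior $\epsilon$; the success probability then splits as $\wt P(\mathbfcal{L})+\epsilon\,\Gamma(\mathbfcal{L})$, where $\wt P$ retains the strict gap and $\Gamma$ is uniformly bounded, so a small enough $\epsilon$ suffices. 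Your construction instead pairs \emph{each} witness term with a complementary outcome $(\id_{\rm X}-M^{(i)}_{\rm X|X'})/n_{\rm X|X'}$, which forces you to (a) pass to pure inputs so that an orthogonal perturbation state $\phi^{(i)}_{\rm X|X'}$ exists, (b) handle the $d_{\rm X'}=1$ case separately, and (c) tune $p_{\rm X|X'}\propto n_{\rm X|X'}^2$ to equalize the prefactors. All of this works, but it is machinery the paper avoids entirely. One small slip: after rescaling by $\gamma$ in step (i) the gap becomes $\gamma\Delta$, not ``unchanged'' as you write; your later bound $\gamma\Delta/2$ is consistent with this, so the error is only expository.
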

{\em Implications of Theorem~\ref{Result:EnsembleStateDiscrimination}.--}
Setting ${\rm X'=S'}$ for every member in $\Lambda$, Theorem~\ref{Result:EnsembleStateDiscrimination} implies that every set of broadcast incompatible channels gives an advantage over compatible ones in some ensemble state discrimination tasks.
This recovers results from Refs.~\cite{Carmeli2019,Uola2019}.
Also, in the particular case of quantum-to-classical channels, Theorem~\ref{Result:EnsembleStateDiscrimination} recovers results on the discrimination advantages of incompatible measurements of Refs.~\cite{Carmeli2019-2,Mori2020,Skrzypczyk2019}.
}

{\em Conclusions.}--
We introduced the dynamical generalization of state marginal problems, termed {\em quantum channel marginal problems} (CMP), which is formulated as a marginal problem for channels. 
The CMP involves in its definition the quantum generalization of the no-signaling conditions. 
It encompasses and naturally generalizes several notions of quantum incompatibility \cite{Haapasalo2014,Girard2020, Haapasalo2019,Haapasalo2015,Carmeli2019,Carmeli2019-2} and channel extendibility \cite{Kaur2019, LaBorde2021}.

The problem can be expressed in terms of states using the Choi-Jamio\l kowski isomorphism, but in general it is irreducible to state marginal problems.
We provided a necessary and sufficient condition for channel incompatibility in terms of a robustness measure, which can be cast into a semidefinite program.
A witness form for dynamical incompatibility can be derived, giving channel incompatibility an operational interpretation in a state discrimination task which recovers several recent results as special cases \cite{Carmeli2019,Uola2019, Carmeli2019-2,Mori2020,Skrzypczyk2019}. 
The study of CMPs not only describes within a unified framework previously disconnected notions of incompatibility that found applications in quantum communication~\cite{Kaur2019}, quantum foundations~\cite{Heinosaari2016} and computing~\cite{LaBorde2021}. 
In analogy to the study of no-signaling correlations, it may also bring novel insights into the dynamical structure of quantum theory as it differs from both classical and supra-quantum theories~\cite{Hoban2018}. 
In this context, we discussed a gap between classical and quantum CMP.

{\em Acknowledgements.}--
We thank (in alphabetical order) Paolo Abiuso, Teiko Heinosaari, Felix Huber, Yeong-Cherng Liang, Kavan Modi, Stefano Pironio, Jacopo Surace, M\'arcio Taddei, and Roope Uola for fruitful discussions and comments.
This project is part of the ICFOstepstone - PhD Programme for Early-Stage Researchers in Photonics, funded by the Marie Sk\l odowska-Curie Co-funding of regional, national and international programmes (GA665884) of the European Commission.
We also acknowledge support from the ERC AdG CERQUTE, the AXA Chair in Quantum Information Science, the Government of Spain (FIS2020-TRANQI and Severo Ochoa CEX2019-000910-S), Fundacio Cellex, Fundacio Mir-Puig, Generalitat de Catalunya (CERCA, AGAUR SGR 1381), EU Marie Sk\l odowska-Curie (H2020-MSCA-IF-2017, GA794842) and grant EQEC No. 682726.

\appendix\onecolumngrid

\section{Proofs for the Section ``Solving the CMP''}\label{App:cmp}
In this section, we prove Eq.~\eqref{Eq:Characterization} in the main text, which can be formally stated as the following lemma:
\begin{alemma}\label{Lemma:Characterization}
$\CY{\mE_{\rm S|S'}}$ is compatible with $\CY{\mE_{\rm X|X'}}$ if and only if 
\CYtwo{
$\tr_{\rm S\setminus X}\left(\mE_{\rm SS'}^\mathcal{J}\right) = \mE_{\rm XX'}^\mathcal{J}\otimes\frac{\id_{\rm S'\setminus X'}}{d_{\rm S'\setminus X'}}.$
}
\end{alemma}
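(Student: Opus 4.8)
The claim is nothing but Definition~\ref{Def:Reduced} transported through the Choi--Jamio\l kowski isomorphism, so the plan is to rewrite Eq.~\eqref{Eq:Def} as an identity between Choi operators. Set ${\rm Y}\coloneqq{\rm S}\setminus{\rm X}$ and ${\rm Y'}\coloneqq{\rm S'}\setminus{\rm X'}$. Equation~\eqref{Eq:Def} demands, for \emph{all} states $\rho_{\rm S'}$, the equality of the two completely positive maps $\mE_{\rm X|X'}\circ\tr_{{\rm Y'}}$ and $\tr_{{\rm Y}}\circ\mE_{\rm S|S'}$, both sending ${\rm S'}$ to ${\rm X}$. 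Since states span the operator space of ${\rm S'}$, this is equivalent to equality of these two linear maps, which in turn --- the Choi assignment being a linear bijection between maps and their Choi operators --- is equivalent to equality of their Choi operators on ${\rm X}\otimes{\rm S'}$. It therefore suffices to identify these two Choi operators.

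The Choi operator of $\tr_{{\rm Y}}\circ\mE_{\rm S|S'}$ is immediate: by linearity of the partial trace and the definition $\mE^\mathcal{J}_{\rm SS'} = (\mE_{\rm S|S'}\otimes\mathcal{I}_{\rm S'})(\proj{\Psi^+_{\rm S'S'}})$, it equals $\tr_{{\rm Y}}(\mE^\mathcal{J}_{\rm SS'})$, the left-hand side of the Lemma. For $\mE_{\rm X|X'}\circ\tr_{{\rm Y'}}$ I would use the factorization $\ket{\Psi^+_{\rm S'S'}} = \ket{\Psi^+_{\rm X'X'}}\otimes\ket{\Psi^+_{\rm Y'Y'}}$, valid because $d_{\rm S'} = d_{\rm X'}d_{\rm Y'}$ so the normalizations multiply. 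Tracing out the \emph{input} copy of ${\rm Y'}$ leaves $\proj{\Psi^+_{\rm X'X'}}\otimes\frac{\id_{\rm Y'}}{d_{\rm Y'}}$ (on the input copy of ${\rm X'}$ and the reference), and applying $\mE_{\rm X|X'}$ to the surviving input copy of ${\rm X'}$ gives $\mE^\mathcal{J}_{\rm XX'}\otimes\frac{\id_{\rm Y'}}{d_{\rm Y'}}$, the right-hand side of the Lemma. Equating the two Choi operators yields the stated identity, in both directions at once.

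I expect the only delicate point to be the bookkeeping of the two isomorphic copies of ${\rm S'}$: one must keep straight which copy is the ``input'' being acted on and traced over and which is the inert ``reference,'' and check that the maximally-entangled-state normalizations combine to produce exactly the factor $1/d_{\rm Y'}$ multiplying $\id_{\rm Y'}$. It is also worth noting that the product form of $\tr_{{\rm Y}}(\mE^\mathcal{J}_{\rm SS'})$ forced by the identity is precisely the no-signaling condition from ${\rm S'}\setminus{\rm X'}$ to ${\rm X}$ recalled in the main text. Everything else is a routine application of the Choi-state properties stated before the Lemma, in particular $\tr_{\rm X}(\mE^\mathcal{J}_{\rm XX'}) = \id_{\rm X'}/d_{\rm X'}$.
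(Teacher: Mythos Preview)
Your proposal is correct and follows essentially the same route as the paper: both compute the Choi operators of the two composite maps $\tr_{\rm S\setminus X}\circ\mE_{\rm S|S'}$ and $\mE_{\rm X|X'}\circ\tr_{\rm S'\setminus X'}$ using the factorization $\ket{\Psi^+_{\rm S'S'}}=\ket{\Psi^+_{\rm X'X'}}\otimes\ket{\Psi^+_{\rm Y'Y'}}$. The only cosmetic difference is that the paper treats the two implications separately---invoking the explicit inverse Choi formula $\mE_{\rm X|X'}(\rho_{\rm X'})=d_{\rm X'}\tr_{\rm X'}[(\id_{\rm X}\otimes\rho_{\rm X'}^{T})\mE_{\rm XX'}^{\mathcal{J}}]$ for sufficiency---whereas you obtain both directions at once by appealing to injectivity of the Choi correspondence.
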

This result has appeared in several previous works~\cite{Beckman2001,Eggeling2002,Piani2006,Duan2016,Wang2017,Hoban2018}, here we provide a proof for the sake of completeness.
\begin{proof}
For every \CY{input state $\rho_{\rm X'}$} \CYtwo{in ${\rm X'}$}, one can write $\CY{\mE_{\rm X|X'}(\rho_{\rm X'}) = d_{\rm X'}}{\rm tr}_{\rm X'}[(\id_{\rm X}\otimes\rho_{\rm X'}^T)\CYtwo{\mE_{\rm XX'}^\mathcal{J}}],$ where $(\cdot)^T$ is the transpose map.
This means
\begin{align}
\tr_{{\rm S}\setminus{\rm X}}\circ\CY{\mE_{\rm S|S'}(\rho_{\rm S'})}= \CY{d_{\rm S'}}\tr_{{\rm S}\setminus{\rm X}}\circ\tr_{{\rm S'}}\left[\left(\id_{\rm S}\otimes\rho_{{\rm S'}}^T\right)\mE_{\rm SS'}^\mathcal{J}\right]=\CY{d_{\rm S'}}\tr_{\rm S'}\left[\left(\id_{{\rm X}}\otimes\rho_{\rm S'}^T\right)\tr_{{\rm S}\setminus{\rm X}}\left(\mE^\mathcal{J}_{\rm SS'}\right)\right],
\end{align}
On the other hand, we have
\begin{align}
\CY{\mE_{{\rm X|X'}}\circ\tr_{{\rm S'}\setminus{\rm X'}}(\rho_{\rm S'})} = \CY{d_{{\rm X'}}}\tr_{{\rm X}'}\left\{\left[\id_{{\rm X}}\otimes\left(\tr_{{\rm S'}\setminus{\rm X}'}(\rho_{\rm S'})\right)^T\right]\mE^\mathcal{J}_{{\rm X}{\rm X}'}\right\}
 = \CY{d_{{\rm X'}}}\tr_{\rm S'}\left[\left(\id_{{\rm X}}\otimes\rho_{\rm S'}^T\right)\left(\mE^\mathcal{J}_{{\rm X}{\rm X}'}\otimes\id_{{\rm S'}\setminus{\rm X}'}\right)\right],
\end{align}
where \CYtwo{we use} the identity ${\rm tr}_{\rm A}(\rho_{\rm AB})^T = {\rm tr}_{\rm A}(\rho_{\rm AB}^T)$~\cite{footnote1}.
Since $\CY{d_{{\rm X'}}}d_{{\rm S}'\setminus{\rm X}'} = d_{\rm S'}$, the validity of Eq.~\eqref{Eq:Characterization} in the main text implies that \CYtwo{$\mE_{\rm S|S'}$ is compatible with $\mE_{{\rm X|X'}}$}.
This proves that Eq.~\eqref{Eq:Characterization} is sufficient for the compatibility.

To show that it is also necessary, we note that the validity of \CY{$\mE_{\rm X|X'}\circ\tr_{{\rm S'}\setminus{\rm X'}}(\rho_{\rm S'}) = \tr_{{\rm S}\setminus{\rm X}}\circ\mE_{\rm S|S'}(\rho_{\rm S'})$} implies
\begin{align}
 \tr_{{\rm S}\setminus{\rm X}}\left(\mE_{\rm SS'}^\mathcal{J}\right)= \left[\left(\tr_{{\rm S}\setminus{\rm X}}\circ\CY{\mE_{\rm S|S'}}\right)\otimes\mathcal{I}_{\rm S'}\right](\CY{\Psi^+_{\rm S'S'}})
=\left[\left(\CY{\mE_{\rm X|X'}\circ\tr_{{\rm S'}\setminus{\rm X'}}}\right)\otimes\mathcal{I}_{\rm S'}\right](\CY{\Psi^+_{\rm S'S'}})
\CYtwo{=\mE^\mathcal{J}_{\rm XX'}\otimes\frac{\id_{\rm S'\setminus X'}}{d_{\rm S'\setminus X'}}.}
\end{align}
This completes the proof.
\end{proof}

Applying Lemma~\ref{Lemma:Characterization} to ${\rm X,Y}$ and ${\rm X\cap Y}$ \CY{(and hence also ${\rm X',Y'}$ and ${\rm X'\cap Y'}$)} shows a characterization for local compatibility.
\CYtwo{
More precisely, we have the following corollary:
\begin{acorollary}
If $\mathcal{E}_{\rm X|X'}$ and $\mathcal{E}_{\rm Y|Y'}$ are compatible, then
$
{\rm Tr}_{\rm X\setminus Y|X'\setminus Y'}(\mathcal{E}_{\rm X|X'}) = {\rm Tr}_{\rm Y\setminus X|Y'\setminus X'}(\mathcal{E}_{\rm Y|Y'}) ;
$
namely, they are locally compatible.
\end{acorollary}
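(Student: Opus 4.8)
The plan is to produce a single channel out of the global map and show that \emph{both} one-sided marginals appearing in the claim coincide with it. Concretely, fix a global channel $\mE_{\rm S|S'}$ realizing both $\mE_{\rm X|X'}$ and $\mE_{\rm Y|Y'}$, set ${\rm Z}\coloneqq{\rm X}\cap{\rm Y}$, ${\rm Z'}\coloneqq{\rm X'}\cap{\rm Y'}$, and put $\Phi\coloneqq\tr_{{\rm S}\setminus{\rm Z}}\circ\mE_{\rm S|S'}$. First I would unpack Definition~\ref{Def:Reduced}: compatibility gives $\tr_{{\rm S}\setminus{\rm X}}\circ\mE_{\rm S|S'}=\mE_{\rm X|X'}\circ\tr_{{\rm S'}\setminus{\rm X'}}$ and its $({\rm Y,Y'})$-analogue. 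Using the disjoint decompositions of output subsystems ${\rm S}\setminus{\rm Z}=({\rm S}\setminus{\rm X})\cup({\rm X}\setminus{\rm Y})$ and its ${\rm X}\leftrightarrow{\rm Y}$ mirror, composing with $\tr_{{\rm X}\setminus{\rm Y}}$ respectively $\tr_{{\rm Y}\setminus{\rm X}}$ yields $\Phi=\tr_{{\rm X}\setminus{\rm Y}}\circ\mE_{\rm X|X'}\circ\tr_{{\rm S'}\setminus{\rm X'}}$ and $\Phi=\tr_{{\rm Y}\setminus{\rm X}}\circ\mE_{\rm Y|Y'}\circ\tr_{{\rm S'}\setminus{\rm Y'}}$.

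The crux is a statement about the intersection of no-signaling constraints: the first expression shows $\Phi$ depends on the input only through $\tr_{{\rm S'}\setminus{\rm X'}}$, i.e. $\Phi(\rho_{\rm S'})=\Phi(\tr_{{\rm S'}\setminus{\rm X'}}(\rho_{\rm S'})\otimes\id_{{\rm S'}\setminus{\rm X'}}/d_{{\rm S'}\setminus{\rm X'}})$, the second that it depends only through $\tr_{{\rm S'}\setminus{\rm Y'}}$, and I want to conclude it depends only through $\tr_{{\rm S'}\setminus{\rm Z'}}$. I would prove this by splitting ${\rm S'}$ into the four disjoint blocks ${\rm Z'}$, ${\rm P}\coloneqq{\rm X'}\setminus{\rm Y'}$, ${\rm Q}\coloneqq{\rm Y'}\setminus{\rm X'}$, ${\rm R}\coloneqq{\rm S'}\setminus({\rm X'}\cup{\rm Y'})$, so that ${\rm S'}\setminus{\rm X'}={\rm QR}$, ${\rm S'}\setminus{\rm Y'}={\rm PR}$, ${\rm S'}\setminus{\rm Z'}={\rm PQR}$, then inserting the maximally mixed state on ${\rm QR}$ (first invariance) and afterwards applying the second invariance to the resulting operator — which also maximally mixes the ${\rm P}$-part — while tracing the normalized identities through. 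This "$\cap$ of semi-causality constraints" step is where I expect the real work: because ${\rm S'}\setminus{\rm X'}$ and ${\rm S'}\setminus{\rm Y'}$ overlap (they share ${\rm R}$), the bookkeeping of partial traces and tensor identities is the only place where an error could creep in.

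Granting the crux, define $\mE_{\rm Z|Z'}(\rho_{\rm Z'})\coloneqq\Phi(\rho_{\rm Z'}\otimes\id_{{\rm S'}\setminus{\rm Z'}}/d_{{\rm S'}\setminus{\rm Z'}})$, a channel from ${\rm Z'}$ to ${\rm Z}$ (trace preservation inherited from $\mE_{\rm S|S'}$), so that $\Phi=\mE_{\rm Z|Z'}\circ\tr_{{\rm S'}\setminus{\rm Z'}}$. Substituting this into the first expression for $\Phi$ and using $\tr_{{\rm S'}\setminus{\rm Z'}}=\tr_{{\rm X'}\setminus{\rm Y'}}\circ\tr_{{\rm S'}\setminus{\rm X'}}$ gives $\tr_{{\rm X}\setminus{\rm Y}}\circ\mE_{\rm X|X'}\circ\tr_{{\rm S'}\setminus{\rm X'}}=\mE_{\rm Z|Z'}\circ\tr_{{\rm X'}\setminus{\rm Y'}}\circ\tr_{{\rm S'}\setminus{\rm X'}}$; since $\tr_{{\rm S'}\setminus{\rm X'}}$ has a right inverse on states (append any fixed state), it cancels on the right, leaving $\tr_{{\rm X}\setminus{\rm Y}}\circ\mE_{\rm X|X'}=\mE_{\rm Z|Z'}\circ\tr_{{\rm X'}\setminus{\rm Y'}}$, which by Definition~\ref{Def:Reduced} is precisely $\Tr_{{\rm X}\setminus{\rm Y}|{\rm X'}\setminus{\rm Y'}}(\mE_{\rm X|X'})=\mE_{\rm Z|Z'}$. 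Interchanging $({\rm X,X'})$ and $({\rm Y,Y'})$ gives $\Tr_{{\rm Y}\setminus{\rm X}|{\rm Y'}\setminus{\rm X'}}(\mE_{\rm Y|Y'})=\mE_{\rm Z|Z'}$ as well, and comparing the two equalities proves local compatibility. As a sanity check and alternative, the same argument runs through Lemma~\ref{Lemma:Characterization} at the level of Choi operators: there the three invariances become the statements that $\tr_{{\rm S}\setminus{\rm Z}}(\mE_{\rm SS'}^{\mathcal J})$ is proportional to the identity on ${\rm S'}\setminus{\rm X'}$ and on ${\rm S'}\setminus{\rm Y'}$, hence on their union ${\rm S'}\setminus{\rm Z'}$, and the common reduced Choi operator is then identified with $\mE_{\rm ZZ'}^{\mathcal J}$.
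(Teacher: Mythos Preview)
Your proof is correct. The paper takes a different route: it works entirely at the level of Choi states via Lemma~\ref{Lemma:Characterization}, first restricting (by ``redefining'') the global system to ${\rm S}={\rm X}\cup{\rm Y}$, ${\rm S'}={\rm X'}\cup{\rm Y'}$, then writing the single identity
\[
\tr_{\rm A}\bigl(\mE_{\rm XX'}^{\mathcal J}\bigr)\otimes\tfrac{\id_{\rm C'}}{d_{\rm C'}}
\;=\;\tr_{\rm AC}\bigl(\mE_{\rm SS'}^{\mathcal J}\bigr)
\;=\;\tr_{\rm C}\bigl(\mE_{\rm YY'}^{\mathcal J}\bigr)\otimes\tfrac{\id_{\rm A'}}{d_{\rm A'}}
\]
(with ${\rm A}={\rm X}\setminus{\rm Y}$, ${\rm C}={\rm Y}\setminus{\rm X}$) and tracing out ${\rm C'}$ resp.\ ${\rm A'}$ to read off the common Choi state $\tr_{\rm ACA'C'}(\mE_{\rm SS'}^{\mathcal J})$ of the two one-sided marginals. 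Your closing ``sanity check'' paragraph is essentially this argument. Your primary argument, by contrast, stays at the channel level via Definition~\ref{Def:Reduced} and keeps the ambient ${\rm S}$ general; this makes the ``intersection of no-signaling constraints'' step (your crux) explicit and avoids the unstated reduction to ${\rm S}={\rm X}\cup{\rm Y}$. The Choi route is shorter because the tensor-identity structure in Lemma~\ref{Lemma:Characterization} makes the intersection step a one-liner; your route is slightly more robust and shows directly, without passing through the isomorphism, that the marginals of $\mE_{\rm X|X'}$ and $\mE_{\rm Y|Y'}$ on $({\rm X}\cap{\rm Y})\,|\,({\rm X'}\cap{\rm Y'})$ actually exist as channels and coincide.
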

\begin{proof}
To see this, we redefine the systems as ${\rm A=X\setminus Y},{\rm B =X\cap Y},{\rm C=Y\setminus X},{\rm S = A\cup B\cup C}$ (similar definitions apply to ${\rm A',B',C',S'}$).
Suppose that $\mathcal{E}_{\rm X|X'} = \mathcal{E}_{\rm AB|A'B'},\mathcal{E}_{\rm Y|Y'} = \mathcal{E}_{\rm BC|B'C'}$ are compatible, then there exists a global channel $\mE_{\rm S|S'}$ such that
${\rm Tr}_{\rm C|C'}(\mathcal{E}_{\rm S|S'}) = \mathcal{E}_{\rm AB|A'B'}$ and ${\rm Tr}_{\rm A|A'}(\mathcal{E}_{\rm S|S'}) = \mathcal{E}_{\rm BC|B'C'}.$
Then Lemma~\ref{Lemma:Characterization} implies that
$
{\rm tr}_{\rm A}\left(\mathcal{E}_{\rm XX'}^\mathcal{J}\right)\otimes\frac{\id_{\rm C'}}{d_{\rm C'}} = {\rm tr}_{\rm AC}\left(\mathcal{E}_{\rm SS'}^\mathcal{J}\right)={\rm tr}_{\rm C}\left(\mathcal{E}_{\rm YY'}^\mathcal{J}\right)\otimes\frac{\id_{\rm A'}}{d_{\rm A'}}.
$
\CYthree{Tracing out ${\rm C'}$ and using Lemma~\ref{Lemma:Characterization} again, we learn that $\mathcal{E}_{\rm X|X'}$ is compatible with a marginal channel in ${\rm B|B'}$ [${\rm Tr}_{\rm X\setminus Y|X'\setminus Y'}(\mathcal{E}_{\rm X|X'})$] with Choi state ${\rm tr}_{\rm CC'}\left(\mathcal{E}_{\rm YY'}^\mathcal{J}\right) =  {\rm tr}_{\rm ACA'C'}\left(\mathcal{E}_{\rm SS'}^\mathcal{J}\right)$.
A similar argument (tracing out ${\rm A'}$) shows that $\mathcal{E}_{\rm Y|Y'}$ is compatible with a marginal channel in ${\rm B|B'}$ [${\rm Tr}_{\rm Y\setminus X|Y'\setminus X'}(\mathcal{E}_{\rm Y|Y'})$] with the same Choi state ${\rm tr}_{\rm AA'}\left(\mathcal{E}_{\rm XX'}^\mathcal{J}\right) = {\rm tr}_{\rm ACA'C'}\left(\mathcal{E}_{\rm SS'}^\mathcal{J}\right)$.
From here we conclude that ${\rm Tr}_{\rm X\setminus Y|X'\setminus Y'}(\mathcal{E}_{\rm X|X'}) = {\rm Tr}_{\rm Y\setminus X|Y'\setminus X'}(\mathcal{E}_{\rm Y|Y'})$.
}
\end{proof}
}

\section{Proofs for the Section ``Channel Incompatibility Robustness''}\label{App:DualWitness}

\subsection{Proof of Eq.~(\ref{Eq:Result:Conjecture2})}
\begin{proof}
By definition, $R(\mathbfcal{E})$ is the solution of 
\begin{eqnarray}\label{Eq:AppMaxForm}
\begin{aligned}
\max_{\mathbfcal{N},\mathbfcal{L},\lambda}\quad&\lambda\\
{\rm s.t.}\quad&\mathbfcal{L}\in\mathfrak{C}, \quad \lambda\in[0,1], \quad \mathbfcal{N}: {\rm vector\;of\;channels}, \quad
\lambda\mathbfcal{E} + (1-\lambda)\mathbfcal{N} = \mathbfcal{L},
\end{aligned}
\end{eqnarray}
where the last condition holds if and only if $\lambda\mE_{\rm XX'}^\mathcal{J} + (1-\lambda)\mathcal{N}_{\rm XX'}^\mathcal{J} = \mathcal{L}_{\rm XX'}^\mathcal{J}$ $\CY{\forall\;{\rm X|X'}\in\Lambda}$.
By Lemma~\ref{Lemma:Characterization}, there exists a global channel $\CY{\mathcal{L}_{\rm S|S'}}$ such that ${\rm tr}_{\rm S\setminus X}\left(\mathcal{L}_{\rm SS'}^\mathcal{J}\right) = \mathcal{L}_{\rm XX'}^\mathcal{J}\otimes\frac{\id_{{\rm S}'\setminus{\rm X}'}}{d_{{\rm S}'\setminus{\rm X}'}}$ $\CY{\forall\;{\rm X|X'}\in\Lambda}$, which implies ${\rm tr}_{\rm S\setminus X}\left(\mathcal{L}_{\rm SS'}^\mathcal{J}\right) = \left[\lambda\mE_{\rm XX'}^\mathcal{J} + (1-\lambda)\mathcal{N}_{\rm XX'}^\mathcal{J}\right]\otimes\frac{\id_{{\rm S}'\setminus{\rm X}'}}{d_{{\rm S}'\setminus{\rm X}'}}\;\forall\;\CY{{\rm X|X'}\in\Lambda.}$
This means, for every $\CY{{\rm X|X'}\in\Lambda}$,
\begin{align}
&{\rm tr}_{\rm S\setminus X}\left(\mathcal{L}_{\rm SS'}^\mathcal{J}\right)\ge\lambda\mE_{\rm XX'}^\mathcal{J}\otimes\frac{\id_{{\rm S}'\setminus{\rm X}'}}{d_{{\rm S}'\setminus{\rm X}'}};\nonumber\\
&{\rm tr}_{{\rm S}\setminus{\rm X}}\left(\mathcal{L}_{\rm SS'}^\mathcal{J}\right) = {\rm tr}_{{\rm SS'}\setminus{\rm XX'}}\left(\mathcal{L}_{\rm SS'}^\mathcal{J}\right)\otimes\frac{\id_{{\rm S'}\setminus{\rm X'}}}{d_{{\rm S'}\setminus{\rm X'}}};\nonumber\\
&{\rm tr}_{\rm S}\left(\mathcal{L}_{\rm SS'}^\mathcal{J}\right) = {\rm tr}_{\rm X}\left(\mathcal{L}_{\rm XX'}^\mathcal{J}\otimes\frac{\id_{{\rm S}'\setminus{\rm X}'}}{d_{{\rm S}'\setminus{\rm X}'}}\right) = \frac{\id_{\rm S'}}{d_{\rm S'}}.
\end{align}
Hence, when $(\mathbfcal{N},\mathbfcal{L},\lambda)$ is feasible for Eq.~\eqref{Eq:AppMaxForm}, $\left(\mathcal{L}_{\rm SS'}^\mathcal{J},\lambda\right)$ is feasible for Eq.~\eqref{Eq:Result:Conjecture2} in the main text.

Conversely, if $(\rho_{\rm SS'},\lambda)$ is feasible for Eq.~\eqref{Eq:Result:Conjecture2}, then the state $\rho_{\rm SS'} = \mathcal{L}_{\rm SS'}^\mathcal{J}$ is a Choi state of a global channel $\CY{\mathcal{L}_{\rm S|S'}(\cdot)\coloneqq d_{\rm S'}}$ $\CY{{\rm tr}_{\rm S'}\left\{[\id_{\rm S}\otimes(\cdot)^T]\rho_{\rm SS'}\right\}}$ [using the first condition in Eq.~\eqref{Eq:Result:Conjecture2}, we have $\tr_{{\rm S}}(\rho_{\rm SS'}) = \frac{\id_{\rm S'}}{d_{\rm S'}}$].
By Lemma~\ref{Lemma:Characterization} and the last condition in \CY{Eq.~(9), $\mathcal{L}_{\rm S|S'}$} has a well-defined marginal in each \CY{${\rm X|X'}\in\Lambda$}, denoted by \CY{$\mathcal{L}_{\rm X|X'}$}, whose Choi state is given by 
\begin{align}\label{Eq:ComputationStep1}
\mathcal{L}_{\rm XX'}^\mathcal{J} = {\rm tr}_{{\rm SS'}\setminus{\rm XX'}}\left(\rho_{\rm SS'}\right).
\end{align}
Now, recall the second condition in Eq.~\eqref{Eq:Result:Conjecture2}; namely, $\lambda \mE_{{\rm X}{\rm X}'}^\mathcal{J}\otimes\frac{\id_{{\rm S}'\setminus{\rm X}'}}{d_{{\rm S}'\setminus{\rm X}'}}\le\tr_{{\rm S}\setminus{\rm X}}(\rho_{\rm SS'})$.
Tracing out ${\rm S'\setminus X'}$ and using Eq.~\eqref{Eq:ComputationStep1}, we obtain the inequality 
\begin{align}\label{Eq:ComputationStep2-1}
\mathcal{L}_{\rm XX'}^\mathcal{J} - \lambda \mE_{{\rm X}{\rm X}'}^\mathcal{J}\ge0.
\end{align}
On the other hand, since $\mathcal{L}_{\rm XX'}^\mathcal{J}$ and $\mathcal{E}_{\rm XX'}^\mathcal{J}$ both have $\frac{\id_{\rm X'}}{d_{\rm X'}}$ as their marginals in ${\rm X'}$, we learn that, when $\lambda<1$, 
\begin{align}\label{Eq:ComputationStep2-2}
\frac{1}{1-\lambda}{\rm tr}_{\rm X}\left(\mathcal{L}_{\rm XX'}^\mathcal{J} - \lambda \mE_{{\rm X}{\rm X}'}^\mathcal{J}\right) = \frac{\id_{\rm X'}}{d_{\rm X'}}.
\end{align}
Eqs.~\eqref{Eq:ComputationStep2-1} and~\eqref{Eq:ComputationStep2-2} imply that, for $\lambda<1$, $\frac{1}{1-\lambda}\left(\mathcal{L}_{\rm XX'}^\mathcal{J} - \lambda \mE_{{\rm X}{\rm X}'}^\mathcal{J}\right)$ is a legal Choi state.
In other words, \CY{$\frac{1}{1-\lambda}\left(\mathcal{L}_{\rm X|X'} - \lambda\mathcal{E}_{\rm X|X'}\right)$ is a channel from ${\rm X'}$ to ${\rm X}$} when $\lambda<1$.
For each \CY{${\rm X|X'}\in\Lambda$}, by defining the channel
\begin{align}
\CY{\mathcal{N}_{\rm X|X'}\coloneqq\frac{1}{1-\lambda}\left(\mathcal{L}_{\rm X|X'} - \lambda\mathcal{E}_{\rm X|X'}\right)}\quad{\rm if}\;\lambda<1\quad\&\quad\CY{\mathcal{N}_{\rm X|X'}:\CYtwo{\rm arbitrary}}\quad{\rm if}\;\lambda = 1,
\end{align}
we have \CY{$\mathcal{L}_{\rm X|X'} = \lambda\mathcal{E}_{\rm X|X'} + (1-\lambda)\mathcal{N}_{\rm X|X'}$ $\forall\;{\rm X|X'}\in\Lambda$, showing that $\left(\{\mathcal{N}_{\rm X|X'}\}_{\rm X|X'\in\Lambda},\{\mathcal{L}_{\rm X|X'}\}_{\rm X|X'\in\Lambda},\lambda\right)$} is feasible for Eq.~\eqref{Eq:AppMaxForm} for every $(\rho_{\rm SS'},\lambda)$ that is feasible for Eq.~\eqref{Eq:Result:Conjecture2}.
Thus, the two optimization problems Eqs.~\eqref{Eq:AppMaxForm} and~\eqref{Eq:Result:Conjecture2} have the same optimum.
\end{proof}

\subsection{Dual Problem of Eq.~(\ref{Eq:Result:Conjecture2})}\label{App:Dual}
\CYtwo{
In what follows, when an operator is written as $V^{\rm (X)}_{\rm Y|Z}$, it means that it is an operator acting on the system ${\rm X}$ with dependency on the \CYthree{output-input pair} ${\rm Y|Z}$.
}
\begin{alemma}\label{Lemma:DualSDP}
The dual of Eq.~\eqref{Eq:Result:Conjecture2} is
\begin{eqnarray}\label{Eq:DualSDP}
\begin{aligned}
\min_{z,H_{\rm S'},\CYtwo{\left\{\HXS\right\},\left\{\ZXS\right\}}}\quad&\frac{{\rm tr}(H_{\rm S'})}{d_{\rm S'}} + z\\
{\rm s.t.}\quad&\id_{\rm S}\otimes H_{\rm S'} + \sum_\CYtwo{\rm X|X'\in\Lambda}\left(\CYtwo{\HXS} - {\rm tr}_{\rm S'\setminus X'}\left(\CYtwo{\HXS}\right)\otimes\frac{\id_{\rm S'\setminus X'}}{d_{\rm S'\setminus X'}} - \CYtwo{\ZXS}\right)\otimes\id_{\rm S\setminus X}\ge0\\
&z + \sum_\CYtwo{\rm X|X'\in\Lambda}{\rm tr}\left[\CYtwo{\ZXS}\left(\mathcal{E}_{\rm XX'}^\mathcal{J}\otimes\frac{\id_{\rm S'\setminus X'}}{d_{\rm S'\setminus X'}}\right)\right]\ge1\\
&\CYtwo{\HXSd = \HXS}\;\CY{\rm \forall\;X|X'\in\Lambda}, \quad H_{\rm S'}^\dagger = H_{\rm S'}; \quad \CYtwo{\ZXS}\ge0\;\CY{\rm \forall\;X|X'\in\Lambda}, \quad z\ge0.
\end{aligned}
\end{eqnarray}
Furthermore, strong duality holds; hence, Eqs.~\eqref{Eq:Result:Conjecture2} and~\eqref{Eq:DualSDP} have the same optimum.
\end{alemma}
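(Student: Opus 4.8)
The plan is to derive Eq.~\eqref{Eq:DualSDP} as the Lagrange dual of the semidefinite program Eq.~\eqref{Eq:Result:Conjecture2} by the standard recipe, and then establish strong duality from Slater's condition.

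\emph{Setting up the dual.} Treat Eq.~\eqref{Eq:Result:Conjecture2} as a conic program whose variables are the Hermitian operator $\rho_{\rm SS'}$, constrained to the positive-semidefinite cone, and the scalar $\lambda$, for which I keep the explicit bound $\lambda\ge0$ and dualize only $\lambda\le1$. I attach dual variables as follows: a Hermitian $H_{\rm S'}$ to the normalization equality $\tr_{\rm S}(\rho_{\rm SS'})=\id_{\rm S'}/d_{\rm S'}$; positive operators $\ZXS\ge0$, one per ${\rm X|X'}\in\Lambda$, to the operator inequalities $\tr_{\rm S\setminus X}(\rho_{\rm SS'})\ge\lambda\,\mE_{\rm XX'}^{\mathcal{J}}\otimes\id_{\rm S'\setminus X'}/d_{\rm S'\setminus X'}$; Hermitian operators $\HXS$ to the no-signaling equalities $\tr_{\rm S\setminus X}(\rho_{\rm SS'})=\tr_{\rm SS'\setminus XX'}(\rho_{\rm SS'})\otimes\id_{\rm S'\setminus X'}/d_{\rm S'\setminus X'}$; and a scalar $z\ge0$ to $\lambda\le1$. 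I form the Lagrangian $L(\rho_{\rm SS'},\lambda;H_{\rm S'},\{\ZXS\},\{\HXS\},z)$ with these multipliers added to the objective $\lambda$ so that $L\ge\lambda$ on the primal feasible set; the dual is then $\min$ over the dual variables of $g\coloneqq\sup_{\rho_{\rm SS'}\ge0,\,\lambda\ge0}L$. Since $L$ is jointly affine in $(\rho_{\rm SS'},\lambda)$, this supremum equals $+\infty$ unless the operator coefficient of $\rho_{\rm SS'}$ in $L$ is negative semidefinite and the scalar coefficient of $\lambda$ is non-positive, in which case $g$ is the constant part of $L$.

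\emph{Reading off the constraints.} Collecting these coefficients requires only the Hilbert--Schmidt adjoints of partial traces, $\tr[Z\,\tr_{\rm S\setminus X}(\rho_{\rm SS'})]=\tr[(Z\otimes\id_{\rm S\setminus X})\rho_{\rm SS'}]$ and $\tr[H\,(\tr_{\rm SS'\setminus XX'}(\rho_{\rm SS'})\otimes\id_{\rm S'\setminus X'}/d_{\rm S'\setminus X'})]=\tr[(\tr_{\rm S'\setminus X'}(H)\otimes\id_{\rm S'\setminus X'}/d_{\rm S'\setminus X'}\otimes\id_{\rm S\setminus X})\rho_{\rm SS'}]$. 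The operator coefficient of $\rho_{\rm SS'}$, once negated to the form ``$\ge0$'', becomes the first constraint of Eq.~\eqref{Eq:DualSDP} after the harmless relabelling $\HXS\to-\HXS$ (a free Hermitian multiplier carries no preferred sign). The scalar coefficient of $\lambda$ is $1-z-\sum_{\rm X|X'}\tr[\ZXS(\mE_{\rm XX'}^{\mathcal{J}}\otimes\id_{\rm S'\setminus X'}/d_{\rm S'\setminus X'})]$; requiring it non-positive---and here is precisely where keeping $\lambda\ge0$ explicit rather than dualizing it turns an equality into an inequality---gives the second constraint $z+\sum_{\rm X|X'}\tr[\ZXS(\mE_{\rm XX'}^{\mathcal{J}}\otimes\id_{\rm S'\setminus X'}/d_{\rm S'\setminus X'})]\ge1$, and the constant part of $L$ is $\tr(H_{\rm S'})/d_{\rm S'}+z$, the dual objective. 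The step most prone to slips, though not conceptually hard, is tracking the tensor-factor structure when applying these adjoints, since $\rm S'$ splits as ${\rm X'}({\rm S'\setminus X'})$ and $\rm S$ as ${\rm X}({\rm S\setminus X})$, with $\HXS,\ZXS$ acting on $\rm XS'$ but $\mE_{\rm XX'}^{\mathcal{J}}$ on $\rm XX'$.

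\emph{Strong duality.} For this I would check Slater's condition for the primal, exhibiting the strictly feasible point $\rho_{\rm SS'}=\id_{\rm SS'}/(d_{\rm S}d_{\rm S'})$, $\lambda=\epsilon$ with $0<\epsilon<(d_{\rm X}d_{\rm X'})^{-1}$ for all ${\rm X|X'}\in\Lambda$. This point is strictly inside the positive cone and satisfies $0<\lambda<1$; it satisfies the normalization and all no-signaling equalities exactly (both sides collapse to $\id_{\rm XS'}/(d_{\rm X}d_{\rm S'})$); and it satisfies each operator inequality strictly, since $\mE_{\rm XX'}^{\mathcal{J}}\le\id_{\rm XX'}$ and $d_{\rm S'}=d_{\rm X'}d_{\rm S'\setminus X'}$ give $\tr_{\rm S\setminus X}(\rho_{\rm SS'})-\epsilon\,\mE_{\rm XX'}^{\mathcal{J}}\otimes\id_{\rm S'\setminus X'}/d_{\rm S'\setminus X'}\ge d_{\rm S'\setminus X'}^{-1}\big((d_{\rm X}d_{\rm X'})^{-1}-\epsilon\big)\id_{\rm XS'}>0$. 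Since the primal is also bounded above ($\lambda\le1$), Slater's theorem for conic programs yields strong duality with both optima attained, which is the last assertion of the lemma. The only fine point here is that, for a program carrying equality constraints, Slater asks for strict feasibility of the inequality and cone constraints while the equalities are merely satisfied---exactly what the maximally mixed state provides.
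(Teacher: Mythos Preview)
Your derivation is correct and follows the same overall route as the paper: write the primal as a conic program, compute the adjoints of the partial-trace maps to read off the dual constraints, and invoke Slater's condition for strong duality. The paper casts the primal into Watrous's block-diagonal standard form $V=\rho_{\rm SS'}\oplus\lambda$ before dualizing, whereas you work directly with the Lagrangian; these are equivalent presentations and yield the same dual. The one substantive difference is in the Slater argument: the paper verifies \emph{dual} strict feasibility (take $\ZXS>0$, $H_{\rm S'}=h\,\id_{\rm S'}$ and $z$ large) together with primal feasibility and boundedness, while you exhibit a strictly feasible \emph{primal} point $(\rho_{\rm SS'},\lambda)=(\id_{\rm SS'}/d_{\rm SS'},\epsilon)$. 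Both directions of Slater are valid here; your choice has the mild advantage that the strictly feasible point is concrete and its verification is a one-line computation, while the paper's choice ties more cleanly to the Watrous reference it cites.
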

\begin{proof}
First, we write Eq.~\eqref{Eq:Result:Conjecture2} in the standard form.
In what follows, $\mathcal{L}({\rm X})$ denotes the set of linear maps on ${\rm X}$.
Then let $V \coloneqq \rho_{\rm SS'}\oplus\lambda$ and $A\coloneqq 0\oplus 1$, both in $\mathcal{L}({\rm SS'})\oplus\mathbb{R}$, such that $\langle V,A\rangle\coloneqq{\rm tr}(V^\dagger A) = \lambda$.
Note that $x\oplus y\coloneqq\left(\begin{matrix}x &\cdot\\ \cdot&y\end{matrix}\right)$, where the off-diagonal terms are irrelevant to the definition of the direct sum spaces.
This is the objective function for the standard form that we will adopt.
The feasible set can be characterized by defining the following functions:
\CYtwo{
$
\Phi\coloneqq\left(\bigoplus_{\rm X|X'\in\Lambda}\Phi_{\rm X|X'}\right)\oplus\Phi_0$ and $
\Psi\coloneqq\left(\bigoplus_{\rm X|X'\in\Lambda}\Psi_{\rm X|X'}\right)\oplus\Psi_0,
$
}
where for each $\CY{\rm X|X'\in\Lambda}$ we have
\begin{align}
&\Phi_\CY{{\rm X|X'}}(V)\coloneqq{\rm tr}_{\rm S\setminus X}(\rho_{\rm SS'}) -{\rm tr}_{\rm SS'\setminus XX'}\left(\rho_{\rm SS'}\right)\otimes\frac{\id_{\rm S'\setminus X'}}{d_{\rm S'\setminus X'}};\\
&\Phi_0(V)\coloneqq{\rm tr}_{\rm S}(\rho_{\rm SS'});\\
&\Psi_\CY{{\rm X|X'}}(V)\coloneqq\lambda\mathcal{E}_{\rm XX'}^\mathcal{J}\otimes\frac{\id_{\rm S'\setminus X'}}{d_{\rm S'\setminus X'}} - {\rm tr}_{\rm S\setminus X}\left(\rho_{\rm SS'}\right);\\
&\Psi_0(V)\coloneqq\lambda,
\end{align}
\CY{where $\Phi_{\rm X|X'}:\mathcal{L}({\rm SS'})\oplus\mathbb{R}\to\mathcal{L}({\rm XS'})$, $\Phi_0:\mathcal{L}({\rm SS'})\oplus\mathbb{R}\to\mathcal{L}({\rm S'})$, $\Psi_{\rm X|X'}:\mathcal{L}({\rm SS'})\oplus\mathbb{R}\to\mathcal{L}({\rm XS'})$, and $\Psi_0:\mathcal{L}({\rm SS'})\oplus\mathbb{R}\to\mathbb{R}$.}
One can check that all of them are hermitian-preserving linear maps.
Also, by choosing 
\begin{align}
&B\coloneqq\left(\bigoplus_\CY{{\rm X|X'\in\Lambda}}0_{\rm XS'}\right)\oplus \frac{\id_{\rm S'}}{d_{\rm S'}};
\quad \quad C\coloneqq\left(\bigoplus_\CY{{\rm X|X'\in\Lambda}}0_{\rm XS'}\right)\oplus1,
\end{align}
where $0_{\rm XS'}$ denotes the zero operator in the system ${\rm XS'}$.
\CYtwo{Note that this means for two different \CYthree{output-input pairs} ${\rm X|X'},{\rm Y|Y'}$, we attribute the same zero operator to them if ${\rm X=Y}$.}
One can rewrite Eq.~\eqref{Eq:Result:Conjecture2} as
\begin{eqnarray}\label{Eq:SDPStandardForm}
\begin{aligned}
\max_{V}\quad&\langle V,A\rangle\\
{\rm s.t.}\quad&\Phi(V) = B, \quad 
\Psi(V)\le C, \quad V\ge0.
\end{aligned}
\end{eqnarray}
The corresponding dual problem is (see, e.g., Sec. 1.2.3 in Ref.~\cite{Watrous-Book})
\begin{eqnarray}\label{Eq:DualSDPStandardForm}
\begin{aligned}
\min_{H,Z}\quad&\langle H,B\rangle + \langle Z,C\rangle\\
{\rm s.t.}\quad&\Phi^\dagger(H) + \Psi^\dagger(Z) \ge A, \quad H^\dagger = H, \quad Z\ge0,
\end{aligned}
\end{eqnarray}
where 
\CYtwo{$
H = \left(\bigoplus_{\rm X|X'\in\Lambda}\HXS\right)\oplus H_{\rm S'}\in\left(\bigoplus_{\rm X|X'\in\Lambda}\mathcal{L}({\rm XS'})\right)\oplus\mathcal{L}({\rm S'})$ and $
Z = \left(\bigoplus_{\rm X|X'\in\Lambda}\ZXS\right)\oplus z\in\left(\bigoplus_{\rm X|X'\in\Lambda}\mathcal{L}({\rm XS'})\right)\oplus\mathbb{R}.
$
}
Now it remains to find $\Phi^\dagger$ and $\Psi^\dagger$.
First, we note that
\begin{align}
\langle \Phi^\dagger(H),V\rangle &= \left\langle  \left(\bigoplus_\CY{{\rm X|X'\in\Lambda}}\CYtwo{\HXS}\right)\oplus H_{\rm S'}, \left(\bigoplus_\CY{{\rm X|X'\in\Lambda}}\Phi_\CY{{\rm X|X'}}(V)\right)\oplus\Phi_0(V)\right\rangle\nonumber\\
&=\langle H_{\rm S'},\Phi_0(V)\rangle + \sum_\CY{{\rm X|X'\in\Lambda}}\left\langle \CYtwo{\HXS},\CY{\Phi_{\rm X|X'}}(V)\right\rangle = \left\langle \Phi_0^\dagger(H_{\rm S'}) + \sum_\CY{{\rm X|X'\in\Lambda}}\CY{\Phi_{\rm X|X'}^\dagger}\left(\CYtwo{\HXS}\right),V\right\rangle,
\end{align}
which means $\Phi^\dagger(H) = \Phi_0^\dagger(H_{\rm S'}) + \CY{\sum_{\rm X|X'\in\Lambda}\Phi_{\rm X|X'}^\dagger}\left(\CYtwo{\HXS}\right)$, \CYtwo{and, similarly, $\Psi^\dagger(Z) = \Psi_0^\dagger(z) + \sum_{\rm X|X'\in\Lambda}\Psi_{\rm X|X'}^\dagger\left(\ZXS\right)$.}
So we find the adjoint of each $\CY{\Phi_{\rm X|X'}},\Phi_0,\CY{\Psi_{\rm X|X'}},\Psi_0$ separately.
Direct computation shows $\langle\Phi_0^\dagger(H_{\rm S'}),V\rangle = {\rm tr}\left[H_{\rm S'}{\rm tr}_{\rm S}\left(\rho_{\rm SS'}\right)\right] = \langle \left(\id_{\rm S}\otimes H_{\rm S'}\right)\oplus0,V\rangle,$ which means 
$
\Phi_0^\dagger(H_{\rm S'}) = \left(\id_{\rm S}\otimes H_{\rm S'}\right)\oplus0.
$
For a given $\CY{\rm X|X'\in\Lambda}$, we have
\begin{align}
\langle\CY{\Phi_{\rm X|X'}^\dagger}\left(\CYtwo{\HXS}\right),V\rangle &= \left\langle \CYtwo{\HXS},{\rm tr}_{\rm S\setminus X}(\rho_{\rm SS'}) -{\rm tr}_{\rm SS'\setminus XX'}\left(\rho_{\rm SS'}\right)\otimes\frac{\id_{\rm S'\setminus X'}}{d_{\rm S'\setminus X'}}\right\rangle\nonumber\\
& ={\rm tr}\left[(\CYtwo{\HXS}\otimes\id_{\rm S\setminus X})\rho_{\rm SS'}\right] - {\rm tr}\left[\left(\frac{{\rm tr}_{\rm S'\setminus X'}\left(\CYtwo{\HXS}\right)}{d_{\rm S'\setminus X'}}\otimes\id_{\rm SS'\setminus XX'}\right)\rho_{\rm SS'}\right]\nonumber\\
&=\left\langle \left(\CYtwo{\HXS}\otimes\id_{\rm S\setminus X} - \frac{{\rm tr}_{\rm S'\setminus X'}\left(\CYtwo{\HXS}\right)}{d_{\rm S'\setminus X'}}\otimes\id_{\rm SS'\setminus XX'}\right)\oplus 0,V\right\rangle,
\end{align}
which implies
\CYtwo{
$
\CY{\Phi_{\rm X|X'}^\dagger}(\CYtwo{\HXS}) = \left[\left(\CYtwo{\HXS} - {\rm tr}_{\rm S'\setminus X'}\left(\CYtwo{\HXS}\right)\otimes\frac{\id_{\rm S'\setminus X'}}{d_{\rm S'\setminus X'}}\right)\otimes\id_{\rm S\setminus X}\right]\oplus 0.
$
}
On the other hand, we have $\langle\Psi_0^\dagger(z),V\rangle = z\lambda$ and hence 
$
\Psi_0^\dagger(z) = 0\oplus z.
$
Also, for a given $\CY{\rm X|X'\in\Lambda}$,
\begin{align}
\langle\CY{\Psi_{\rm X|X'}^\dagger}(\CYtwo{\ZXS}),V\rangle &= \left\langle \CYtwo{\ZXS},\lambda\mathcal{E}_{\rm XX'}^\mathcal{J}\otimes\frac{\id_{\rm S'\setminus X'}}{d_{\rm S'\setminus X'}} - {\rm tr}_{\rm S\setminus X}\left(\rho_{\rm SS'}\right)\right\rangle\nonumber\\
& =\lambda{\rm tr}\left[\CYtwo{\ZXS}\left(\mathcal{E}_{\rm XX'}^\mathcal{J}\otimes\frac{\id_{\rm S'\setminus X'}}{d_{\rm S'\setminus X'}}\right)\right] - {\rm tr}\left[\left(\CYtwo{\ZXS}\otimes\id_{\rm S\setminus X}\right)\rho_{\rm SS'}\right]\nonumber\\
&=\left\langle \left(-\CYtwo{\ZXS}\otimes\id_{\rm S\setminus X}\right)\oplus {\rm tr}\left[\CYtwo{\ZXS}\left(\mathcal{E}_{\rm XX'}^\mathcal{J}\otimes\frac{\id_{\rm S'\setminus X'}}{d_{\rm S'\setminus X'}}\right)\right],V\right\rangle.
\end{align}
From here we conclude that
\CYtwo{
$
\CY{\Psi_{\rm X|X'}^\dagger}(\CYtwo{\ZXS}) = \left(-\CYtwo{\ZXS}\otimes\id_{\rm S\setminus X}\right)\oplus {\rm tr}\left[\CYtwo{\ZXS}\left(\mathcal{E}_{\rm XX'}^\mathcal{J}\otimes\frac{\id_{\rm S'\setminus X'}}{d_{\rm S'\setminus X'}}\right)\right].
$
}

Combining everything and replacing in Eq.~\eqref{Eq:DualSDPStandardForm}, one can obtain Eq.~\eqref{Eq:DualSDP}.
More precisely, we have $\langle H,B\rangle = {\rm tr}\left(H_{\rm S'}\frac{\id_{\rm S'}}{d_{\rm S'}}\right), \langle Z,C\rangle = z$, and
\begin{align}
\Phi^\dagger(H) + \Psi^\dagger(Z) = &\left[\id_{\rm S}\otimes H_{\rm S'} + \sum_\CY{{\rm X|X'\in\Lambda}}\left(\CYtwo{\HXS} - {\rm tr}_{\rm S'\setminus X'}\left(\CYtwo{\HXS}\right)\otimes\frac{\id_{\rm S'\setminus X'}}{d_{\rm S'\setminus X'}}-\CYtwo{\ZXS}\right)\otimes\id_{\rm S\setminus X}\right]\nonumber\\
&\oplus\left(z + \sum_\CY{{\rm X|X'\in\Lambda}}{\rm tr}\left[\CYtwo{\ZXS}\left(\mathcal{E}_{\rm XX'}^\mathcal{J}\otimes\frac{\id_{\rm S'\setminus X'}}{d_{\rm S'\setminus X'}}\right)\right]\right).
\end{align}
Finally, note that the primal problem is finite and feasible (taking $\rho_{\rm SS'} = \frac{\id_{\rm SS'}}{d_{\rm SS'}}$ and $\lambda = 0$). 
Also, the dual is strictly feasible by taking any \CYtwo{$\HXS$} Hermitian, \CYtwo{$\ZXS > 0$}, and $H_{\rm S'} = h\id_{\rm S'}$ for $h>0, z>0$ large enough. 
By Slater's condition (Theorem 1.18 in Ref.~\cite{Watrous-Book}), strong duality holds.
\end{proof}

\subsection{Proof of Eq.~(\ref{Eq:Result:CompatibilityProblemCondition})}
From Lemma~\ref{Lemma:DualSDP} we can prove the witness form [Eq.~\eqref{Eq:Result:CompatibilityProblemCondition} in the main text] as follows:
\begin{proof}
Since the validity of Eq.~\eqref{Eq:Result:CompatibilityProblemCondition} implies that $\mathbfcal{E}$ is incompatible, it suffices to show the necessity; i.e., incompatibility implies Eq.~\eqref{Eq:Result:CompatibilityProblemCondition}.
To start with, we note that when $\left\{\CYtwo{\ZXS}\right\}_\CY{{\rm X|X'\in\Lambda}}$ and $z$ satisfy \CYtwo{the second constraint} in Eq.~\eqref{Eq:DualSDP}, we have $z\ge 1-\sum_\CY{{\rm X|X'\in\Lambda}}{\rm tr}\left[\CYtwo{\ZXS}\left(\mathcal{E}_{\rm XX'}^\mathcal{J}\otimes\frac{\id_{\rm S'\setminus X'}}{d_{\rm S'\setminus X'}}\right)\right]$.
Using Lemma~\ref{Lemma:DualSDP} and strong duality, we conclude that $R(\mathbfcal{E})$ is lower bounded by
\begin{eqnarray}\label{Eq:LowerBound-Computation-DualSDP2}
\begin{aligned}
\min_{H_{\rm S'},\CYtwo{\left\{\HXS\right\},\left\{\ZXS\right\}}}\quad&\frac{{\rm tr}(H_{\rm S'})}{d_{\rm S'}} + 1 - \sum_\CY{{\rm X|X'\in\Lambda}}{\rm tr}\left[\CYtwo{\ZXS}\left(\mathcal{E}_{\rm XX'}^\mathcal{J}\otimes\frac{\id_{\rm S'\setminus X'}}{d_{\rm S'\setminus X'}}\right)\right]\\
{\rm s.t.}\quad&\id_{\rm S}\otimes H_{\rm S'} + \sum_\CY{{\rm X|X'\in\Lambda}}\left(\CYtwo{\HXS} - {\rm tr}_{\rm S'\setminus X'}\left(\CYtwo{\HXS}\right)\otimes\frac{\id_{\rm S'\setminus X'}}{d_{\rm S'\setminus X'}} - \CYtwo{\ZXS}\right)\otimes\id_{\rm S\setminus X}\ge0\\
& \CYtwo{\HXSd = \HXS}\;\CY{\rm \forall\;X|X'\in\Lambda}, \quad H_{\rm S'}^\dagger = H_{\rm S'},  \quad \CYtwo{\ZXS}\ge0\;\CY{\rm \forall\;X|X'\in\Lambda}.
\end{aligned}
\end{eqnarray}
Note that since the objective function becomes independent of $z$, so \CYtwo{the second constraint} in Eq.~\eqref{Eq:DualSDP} always holds (e.g., by a given $\left\{\CYtwo{\ZXS}\right\}_\CY{{\rm X|X'\in\Lambda}}$ and a high enough finite $z$) and hence can be dropped.
Now we consider the following estimate
\begin{align}
\frac{{\rm tr}(H_{\rm S'})}{d_{\rm S'}} &= \max_{\mathbfcal{L}\in\mathfrak{C}}{\rm tr}\left[(\id_{\rm S}\otimes H_{\rm S'})\mathcal{L}_{\rm SS'}^\mathcal{J}\right]\ge\max_{\mathbfcal{L}\in\mathfrak{C}}\sum_\CY{{\rm X|X'\in\Lambda}}{\rm tr}\left[\left(\left(-\CYtwo{\HXS} + {\rm tr}_{\rm S'\setminus X'}\left(\CYtwo{\HXS}\right)\otimes\frac{\id_{\rm S'\setminus X'}}{d_{\rm S'\setminus X'}} + \CYtwo{\ZXS}\right)\otimes\id_{\rm S\setminus X}\right)\mathcal{L}_{\rm SS'}^\mathcal{J}\right]\nonumber\\
&=\max_{\mathbfcal{L}\in\mathfrak{C}}\sum_\CY{{\rm X|X'\in\Lambda}}{\rm tr}\left[\left(-\CYtwo{\HXS} + {\rm tr}_{\rm S'\setminus X'}\left(\CYtwo{\HXS}\right)\otimes\frac{\id_{\rm S'\setminus X'}}{d_{\rm S'\setminus X'}} + \CYtwo{\ZXS}\right)\left(\mathcal{L}_{\rm XX'}^\mathcal{J}\otimes\frac{\id_{\rm S'\setminus X'}}{d_{\rm S'\setminus X'}}\right)\right]\nonumber\\
&=\max_{\mathbfcal{L}\in\mathfrak{C}}\sum_\CY{{\rm X|X'\in\Lambda}}{\rm tr}\left[\CYtwo{\ZXS}\left(\mathcal{L}_{\rm XX'}^\mathcal{J}\otimes\frac{\id_{\rm S'\setminus X'}}{d_{\rm S'\setminus X'}}\right)\right],
\end{align}
where in the first line $\mathcal{L}_{\rm SS'}^\mathcal{J}$ denotes the Choi state of the global channel $\CY{\mathcal{L}_{\rm S|S'}}$ compatible with $\mathbfcal{L} = \CY{\{\mathcal{L}_{\rm X|X'}\}_{\rm X|X'\in\Lambda}}$, and the equality follows from the property of a Choi state; the \CY{inequality} is due to \CYtwo{the first constraint} in Eq.~\eqref{Eq:DualSDP}; \CY{the second line} is due to Lemma~\ref{Lemma:Characterization}; the last line is because of ${\rm tr}\left[\CYtwo{\HXS}\left(\mathcal{L}_{\rm XX'}^\mathcal{J}\otimes\frac{\id_{\rm S'\setminus X'}}{d_{\rm S'\setminus X'}}\right)\right] = {\rm tr}\left[{\rm tr}_{\rm S'\setminus X'}\left(\CYtwo{\HXS}\right)\otimes\frac{\id_{\rm S'\setminus X'}}{d_{\rm S'\setminus X'}} \left(\mathcal{L}_{\rm XX'}^\mathcal{J}\otimes\frac{\id_{\rm S'\setminus X'}}{d_{\rm S'\setminus X'}}\right)\right]$.
From here we conclude that $R(\mathbfcal{E})$ is lower bounded by
\begin{eqnarray}
\begin{aligned}
\min_{\left\{\CYtwo{\ZXS}\right\}}\quad&1 - \sum_\CY{{\rm X|X'\in\Lambda}}{\rm tr}\left[\CYtwo{\ZXS}\left(\mathcal{E}_{\rm XX'}^\mathcal{J}\otimes\frac{\id_{\rm S'\setminus X'}}{d_{\rm S'\setminus X'}}\right)\right] + \max_{\mathbfcal{L}\in\mathfrak{C}}\sum_\CY{{\rm X|X'\in\Lambda}}{\rm tr}\left[\CYtwo{\ZXS}\left(\mathcal{L}_{\rm XX'}^\mathcal{J}\otimes\frac{\id_{\rm S'\setminus X'}}{d_{\rm S'\setminus X'}}\right)\right]\\
{\rm s.t.}\quad&\CYtwo{\ZXS}\ge0\;\CY{\rm \forall\;X|X'\in\Lambda}.
\end{aligned}
\end{eqnarray}
The first constraint in Eq.~\eqref{Eq:LowerBound-Computation-DualSDP2} is dropped because it always holds with a properly chosen $H_{\rm S'}$ (e.g., one can choose $H_{\rm S'} = h\id_{\rm S'}$ with a high enough $h$) and the objective function is independent of $H_{\rm S'},\left\{\CYtwo{\HXS}\right\}_\CY{{\rm X|X'\in\Lambda}}$.
This means that when $\mathbfcal{E}$ is incompatible, and hence $R(\mathbfcal{E})<1$, there exist positive operators $\CYtwo{\ZXS}$ such that
\CYtwo{
$
\max_{\mathbfcal{L}\in\mathfrak{C}}\sum_\CY{{\rm X|X'\in\Lambda}}{\rm tr}\left[\CYtwo{\ZXS}\left(\mathcal{L}_{\rm XX'}^\mathcal{J}\otimes\frac{\id_{\rm S'\setminus X'}}{d_{\rm S'\setminus X'}}\right)\right] < \sum_\CY{{\rm X|X'\in\Lambda}}{\rm tr}\left[\CYtwo{\ZXS}\left(\mathcal{E}_{\rm XX'}^\mathcal{J}\otimes\frac{\id_{\rm S'\setminus X'}}{d_{\rm S'\setminus X'}}\right)\right].
$
}
Finally, let $\CYtwo{\wt{H}_{\rm X|X'}}\coloneqq\frac{{\rm tr}_{\rm S'\setminus X'}\left(\CYtwo{\ZXS}\right)}{d_{\rm S'\setminus X'}}$, which is again positive.
Then the above inequality implies 
\CYtwo{
$
\max_{\mathbfcal{L}\in\mathfrak{C}}\sum_\CY{{\rm X|X'\in\Lambda}}{\rm tr}\left(\CYtwo{\wt{H}_{\rm X|X'}}\mathcal{L}_{\rm XX'}^\mathcal{J}\right) < \sum_\CY{{\rm X|X'\in\Lambda}}{\rm tr}\left(\CYtwo{\wt{H}_{\rm X|X'}}\mE_{\rm XX'}^\mathcal{J}\right),
$
}
\CYtwo{and the proof is completed.}
\end{proof}

\section{Examples of Incompatibility}\label{App:Example}
In this section, we go through the computational details for the examples of incompatibility that we mention in the main text.
We start with constructing a family of examples in a simple tripartite setting ${\rm ABC}$ with ${\rm A\simeq A'\simeq C\simeq C'}$ and ${\rm B\simeq B'}$.
Let $\mE_{\rm X} = \mE_{\rm X|X'}$ when ${\rm X=X'}$.
In what follows, $\ket{\phi_{\rm XBB'}}$ is a pure state satisfying ${\rm tr}_{\rm XB}(\proj{\phi_{\rm XBB'}}) = \frac{\id_{\rm B'}}{d_{\rm B'}}$.
We use subscripts to indicate the same state distributed among different local systems.
Define a channel $\mathcal{M}_{\rm XB}$ through its Choi state ({\rm X = A,C}):
\begin{align}\label{Eq:ExampleOne}
\mathcal{M}_{\rm XX'BB'}^\mathcal{J}\coloneqq\proj{\phi_{\rm XBB'}}\otimes\frac{\id_{\rm X'}}{d_{\rm X'}},
\end{align}
It is easy to see that $\mathcal{M}_{\rm AB}$ and $\mathcal{M}_{\rm CB}$ are locally compatible; i.e., $\Tr_{\rm A|A'}(\mathcal{M}_{\rm AB})=\Tr_{\rm C|C'}(\mathcal{M}_{\rm CB})$. 
Taking the Hermitian operators $H_{\rm XB|X'B'} = d_{\rm X'}\mathcal{M}_{\rm XX'BB'}^\mathcal{J}$ and using Eq.~\eqref{Eq:Result:CompatibilityProblemCondition} in the main text, one can show the following result:
\newpage
\begin{aproposition}
$\mathcal{M}_{\rm AB}$ and $\mathcal{M}_{\rm CB}$ are incompatible if and only if $\ket{\phi_{\rm XBB'}}$ is non-product in the $\CYthree{\rm X\;vs.\;BB'}$ bipartition.
\end{aproposition}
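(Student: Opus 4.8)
The plan is to pass to the Choi picture via Lemma~\ref{Lemma:Characterization}, reducing the existence of a global channel realizing both $\mathcal{M}_{\rm AB}$ and $\mathcal{M}_{\rm CB}$ to the existence of a positive operator $W$ on ${\rm ABCA'B'C'}$ (no extra Choi-state constraint is needed, as explained in the main text) with
\[
{\rm tr}_{\rm C}(W)=\proj{\phi_{\rm ABB'}}\otimes\frac{\id_{\rm A'C'}}{d_{\rm A'}d_{\rm C'}},\qquad {\rm tr}_{\rm A}(W)=\proj{\phi_{\rm CBB'}}\otimes\frac{\id_{\rm A'C'}}{d_{\rm A'}d_{\rm C'}},
\]
where ${\rm S}={\rm ABC}$, ${\rm S}'={\rm A'B'C'}$, and $\ket{\phi_{\rm ABB'}},\ket{\phi_{\rm CBB'}}$ denote the same state on relabelled systems (these two requirements use Lemma~\ref{Lemma:Characterization} with ${\rm S}\setminus{\rm X}={\rm C}$ and ${\rm S}\setminus{\rm X}={\rm A}$ respectively, together with $\mathcal{M}_{\rm XX'BB'}^\mathcal{J}=\proj{\phi_{\rm XBB'}}\otimes\id_{\rm X'}/d_{\rm X'}$). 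The proposition then reads: such a $W$ exists if and only if $\ket{\phi_{\rm XBB'}}$ factorizes across the ${\rm X}$ vs ${\rm BB'}$ cut.

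For the easy implication I would assume $\ket{\phi_{\rm XBB'}}=\ket{\alpha_{\rm X}}\otimes\ket{\beta_{\rm BB'}}$ and take $W=\proj{\alpha_{\rm A}}\otimes\proj{\alpha_{\rm C}}\otimes\proj{\beta_{\rm BB'}}\otimes\id_{\rm A'C'}/(d_{\rm A'}d_{\rm C'})\ge0$. Using ${\rm tr}(\proj{\alpha_{\rm A}})={\rm tr}(\proj{\alpha_{\rm C}})=1$ one checks both displayed marginal conditions directly, so by Lemma~\ref{Lemma:Characterization} the channel with Choi state $W$ realizes $\mathcal{M}_{\rm AB}$ and $\mathcal{M}_{\rm CB}$; hence $\mathbfcal{M}\coloneqq\{\mathcal{M}_{\rm AB},\mathcal{M}_{\rm CB}\}\in\mathfrak{C}$. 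Equivalently, incompatibility forces $\ket{\phi_{\rm XBB'}}$ to be non-product.

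For the converse, suppose for contradiction that $W$ exists while $\ket{\phi_{\rm XBB'}}$ is non-product. Tracing ${\rm A'C'}$ out of the first marginal equation gives ${\rm tr}_{\rm CA'C'}(W)=\proj{\phi_{\rm ABB'}}$, which is \emph{pure} on ${\rm ABB'}$. The rigidity step is the elementary fact that a state whose reduction onto a subsystem is pure is a tensor product across that cut (purify $W$ and use that a global pure state with a pure marginal factorizes); hence $W=\proj{\phi_{\rm ABB'}}\otimes\omega_{\rm CA'C'}$ for some state $\omega$. Substituting into the second marginal equation and tracing ${\rm A'C'}$ out yields ${\rm tr}_{\rm A}(\proj{\phi_{\rm ABB'}})\otimes{\rm tr}_{\rm A'C'}(\omega_{\rm CA'C'})=\proj{\phi_{\rm CBB'}}$. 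The right-hand side is a pure state on ${\rm CBB'}$ which the left-hand side displays as a product across ${\rm C}$ vs ${\rm BB'}$; since a pure product state is a tensor product, $\ket{\phi_{\rm CBB'}}$ — hence $\ket{\phi_{\rm XBB'}}$ — is product in the ${\rm X}$ vs ${\rm BB'}$ bipartition, a contradiction. I expect this rigidity observation to be the main obstacle: once one notices that the two no-signaling marginal constraints combined with the purity of the Choi states of the $\mathcal{M}_{\rm XB}$ pin $W$ down to a single tensor product, the obstruction to a global channel is precisely the entanglement of $\ket{\phi_{\rm XBB'}}$ across ${\rm X}|{\rm BB'}$; the rest is routine partial-trace bookkeeping plus the observation that $\mathcal{M}_{\rm AB}$ and $\mathcal{M}_{\rm CB}$ are the same channel relabelled.

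Finally, the same conclusion can be repackaged as an incompatibility witness, as indicated before the statement: inserting $H_{\rm XB|X'B'}=d_{\rm X'}\mathcal{M}_{\rm XX'BB'}^\mathcal{J}$ into Eq.~\eqref{Eq:Result:CompatibilityProblemCondition}, the left-hand side equals $\sum_{\rm X}d_{\rm X'}{\rm tr}\big((\mathcal{M}_{\rm XX'BB'}^\mathcal{J})^2\big)=2$, while for $\mathbfcal{L}\in\mathfrak{C}$ the right-hand side equals $\sum_{\rm X}\bra{\phi_{\rm XBB'}}{\rm tr}_{\rm X'}(\mathcal{L}_{\rm XX'BB'}^\mathcal{J})\ket{\phi_{\rm XBB'}}\le2$, with equality only if $\mathcal{L}_{\rm X|X'}=\mathcal{M}_{\rm XB}$ for both ${\rm X}$ (again by the pure-marginal rigidity, which forces the Choi state to be $\proj{\phi_{\rm XBB'}}\otimes\id_{\rm X'}/d_{\rm X'}$); hence the witness is strictly violated exactly when $\mathbfcal{M}\notin\mathfrak{C}$, i.e. when $\ket{\phi_{\rm XBB'}}$ is non-product.
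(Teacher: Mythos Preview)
Your proof is correct. The easy direction matches the paper exactly. For the hard direction (non-product $\Rightarrow$ incompatible) you take a genuinely different and more elementary route: you work directly with the putative global Choi state $W$, use the ``pure marginal forces tensor product'' fact to pin down $W=\proj{\phi_{\rm ABB'}}\otimes\omega_{\rm CA'C'}$, and then read off a contradiction from the second marginal constraint. The paper instead goes through the witness formalism of Eq.~\eqref{Eq:Result:CompatibilityProblemCondition}: with $H_{\rm XB|X'B'}=d_{\rm X'}\mathcal{M}_{\rm XX'BB'}^\mathcal{J}$ it bounds $\max_{\mathbfcal{L}\in\mathfrak{C}}\langle{\bf H},\mathbfcal{L}^\mathcal{J}\rangle\le 2$, argues that saturation would require a state $\rho_{\rm ABB'C}$ with \emph{both} reductions equal to the pure $\proj{\phi_{\rm XBB'}}$, and then invokes entanglement monogamy (citing Ref.~\cite{Doherty2014}) to rule this out in the non-product case. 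Your rigidity argument is precisely the elementary content of that monogamy statement, so the two proofs are morally the same; yours is more self-contained and avoids the detour through the witness, while the paper's version has the advantage of illustrating Eq.~\eqref{Eq:Result:CompatibilityProblemCondition} in action. Your final paragraph correctly repackages the same reasoning as a witness violation, though note that as written its equality clause (``only if $\mathcal{L}_{\rm X|X'}=\mathcal{M}_{\rm XB}$'') already presupposes the incompatibility conclusion, so it is best viewed as a reformulation of your direct argument rather than an independent proof.
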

\begin{proof}
First, the sufficiency ($\Rightarrow$) can be seen by showing the counterpositive.
Suppose $\ket{\phi_{\rm XBB'}} = \ket{\phi}_{\rm X}\otimes \ket{\xi}_{\rm BB'}$, which is product in $\CYthree{\rm X\;vs.\;BB'}$ bipartition.
Then by Lemma~\ref{Lemma:Characterization} (${\rm S = ABC}$) $\mE_{\rm SS'}^\mathcal{J} = \proj{\phi}_{\rm A}\otimes\proj{\phi}_{\rm C}\otimes\proj{\xi}_{\rm BB'}\otimes\frac{\id_{\rm A'C'}}{d_{\rm A'C'}}$ is the Choi state of a global channel compatible with $\{\mathcal{M}_{\rm AB},\mathcal{M}_{\rm CB}\}$.

To prove the necessity ($\Leftarrow$), consider the Hermitian operators $H_{\rm XB|X'B'}\coloneqq d_{\rm X'}\mathcal{M}_{\rm XX'BB'}^\mathcal{J}$ (${\rm X = A,C}$).
Then we have
\begin{align}
\CYtwo{\max_{\mathbfcal{L}\in\mathfrak{C}}}\left[{\rm tr}\left(\CY{H_{\rm AB|A'B'}}\mathcal{L}_{\rm AA'BB'}^\mathcal{J}\right) + {\rm tr}\left(\CY{H_{\rm CB|C'B'}}\mathcal{L}_{\rm CC'BB'}^\mathcal{J}\right)\right]&\CYtwo{\le}\CYtwo{\max_{\rho_{\rm SS'}}}\;{\rm tr}\left[\left(\proj{\phi_{\rm ABB'}}\otimes\id_{\rm A'CC'}+\proj{\phi_{\rm CBB'}}\otimes\id_{\rm AA'C'}\right)\rho_{\rm SS'}\right]\nonumber\\
&\CYtwo{\le 2},
\end{align}
where the last inequality is saturated if and only if there exists a state $\rho_{\rm ABB'C}$ such that
$
{\rm tr}\left[\left(\proj{\phi_{\rm ABB'}}\otimes\id_{\rm C}\right)\rho_{\rm ABB'C}\right] = 1 = {\rm tr}\left[\left(\proj{\phi_{\rm CBB'}}\otimes\id_{\rm A}\right)\rho_{\rm ABB'C}\right].
$
This is true if and only if ${\rm tr}_{\rm C}(\rho_{\rm ABB'C}) = \proj{\phi_{\rm ABB'}}$ and ${\rm tr}_{\rm A}(\rho_{\rm ABB'C}) = \proj{\phi_{\rm CBB'}}$.
Now, by entanglement monogamy (see, e.g., Ref.~\cite{Doherty2014}), it is impossible for such $\rho_{\rm ABB'C}$ to exist when $\ket{\phi_{\rm XBB'}}$ is non-product in \CYthree{${\rm X\;vs.\;BB'}$ bipartition}. 
Since, by assumption, $\ket{\phi_{\rm XBB'}}$ is non-product, we conclude that
\CYtwo{
$
\CYtwo{\max_{\mathbfcal{L}\in\mathfrak{C}}}\left[{\rm tr}\left(\CY{H_{\rm AB|A'B'}}\mathcal{L}_{\rm AA'BB'}^\mathcal{J}\right) + {\rm tr}\left(\CY{H_{\rm CB|C'B'}}\mathcal{L}_{\rm CC'BB'}^\mathcal{J}\right)\right] \CYtwo{< 2}.
$
}
On the other hand, a direct computation shows that 
$
{\rm tr}\left(\CY{H_{\rm AB|A'B'}}\mathcal{M}_{\rm AA'BB'}^\mathcal{J}\right) + {\rm tr}\left(\CY{H_{\rm CB|C'B'}}\mathcal{M}_{\rm CC'BB'}^\mathcal{J}\right) \CYtwo{= 2}.
$
By Eq.~\eqref{Eq:Result:CompatibilityProblemCondition} in the main text we learn that $\{\mathcal{M}_{\rm AB},\mathcal{M}_{\rm CB}\}\notin\mathfrak{C}$; i.e., incompatible.
\end{proof}
A potential physical explanation for this result is that, due to entanglement monogamy~\cite{Coffman2000}, it is impossible to clone quantum correlations; i.e., the mapping $\ket{\psi_{\rm AB}}\to\ket{\psi_{\rm ABC}}$ such that $\ket{\psi_{\rm AB}} = \ket{\psi_{\rm BC}}$ is possible only when $\ket{\psi_{\rm AB}}$ is product in the $\CYthree{\rm A\;vs.\;B}$ bipartition.

\CYthree{
Now we consider the example mentioned in the main text; namely, consider the three-qubit state $\ket{\phi_{\rm XBB'}} = \ket{\rm GHZ_{XBB'}}$, where $\ket{\rm GHZ_{XYZ}}\coloneqq\frac{1}{\sqrt{2}}(\ket{000}_{\rm XYZ} + \ket{111}_{\rm XYZ})$ is the {\em Greenberger-Horne-Zeilinger state}~\cite{GHZ} in a three-qubit system ${\rm XYZ}$.
}
Numerically, we found robustness $0.75$, achieved by the global channel ${\mathcal{G}}_{\rm ABC}$ with the following Choi state:
\begin{align}\label{Eq:NumericalGlobalChannel}
\CYthree{{\mathcal{G}}_{\rm ABCA'B'C'}^\mathcal{J}} = &\frac{\id_{\rm A'C'}}{4}\otimes\frac{1}{12}[4\proj{0000} + \proj{0001} + \ket{0001}\bra{0010} + \ket{0010}\bra{0001} + \proj{0010}\nonumber\\
&+4\proj{1111} + \proj{1101} + \ket{1101}\bra{1110} + \ket{1110}\bra{1101} + \proj{1110}\nonumber\\
&+2\ket{0000}\bra{1101} + 2\ket{0000}\bra{1110} + 2\ket{0001}\bra{1111} + 2\ket{0010}\bra{1111}\nonumber\\
&+2\ket{1101}\bra{0000} + 2\ket{1110}\bra{0000} + 2\ket{1111}\bra{0001} + 2\ket{1111}\bra{0010}]_{\rm BB'AC}.
\end{align} 
\CYthree{Note that, in the above equation and for the rest of this section, subscripts of pure states denote the order of subsystems that the states live in.}
As mentioned in the main text, a natural question is to ask whether Eq.~\eqref{Eq:NumericalGlobalChannel} is from a global channel achieved by cloning the bipartite channel $\mathcal{M}_{\rm XB}$ locally with the help of an optimal universal cloning machine~\cite{Cloning-RMP}.
More precisely, Bob gets an input state in ${\rm B}$, takes an ancillary system ${\rm X}$ initially prepared in $\ket{0}_{\rm X}$ and performs a ${\rm CNOT}_{\rm XB}$ controlled on B. 
Then he applies the optimal universal cloning machine to ${\rm X}$ and sends one copy to Alice and one to Charlie (denote the corresponding channel \CYthree{$\mathcal{C}_{\rm AC|X}$}). 
The resulting channel is 
\begin{align}\label{Eq:CloningProtocol}
\wt{\mathcal{M}}_{\rm ABC}(\cdot)\coloneqq(\CYthree{\mathcal{C}_{\rm AC|X}}\otimes\mathcal{I}_{\rm B})\circ{\rm CNOT}_{\rm XB}[\proj{0}_{\rm X}\otimes{\rm tr}_{\rm AC}(\cdot)],
\end{align}
Formally, an optimal universal cloning machine that clones arbitrary states in ${\rm X}$ into ${\rm AC}$ is a unitary operator acting as~\cite{Cloning-RMP}
\begin{align}\label{Eq:OptimalCloningMachine}
&\ket{0}_{\rm X}\otimes\ket{00}_{\rm CM}\mapsto \sqrt{\frac{2}{3}}\ket{001}_{\rm ACM} - \sqrt{\frac{1}{3}}\ket{\psi_+}_{\rm AC}\otimes\ket{0}_{\rm M};\\
&\ket{1}_{\rm X}\otimes\ket{00}_{\rm CM}\mapsto -\sqrt{\frac{2}{3}}\ket{110}_{\rm ACM} + \sqrt{\frac{1}{3}}\ket{\psi_+}_{\rm AC}\otimes\ket{1}_{\rm M}.
\end{align}
where $\ket{\psi_+}\coloneqq\frac{1}{\sqrt{2}}\left(\ket{10}+\ket{01}\right)$ and ${\rm M}$ is an ancillary system (the ``machine'') that will be dropped in the end.
The map \CY{$\mathcal{C}_{\rm AC|X}$ from ${\rm X}$ to ${\rm AC}$} is then obtained by tracing out ${\rm M}$ after the cloning unitary. To verify that the global channel Eq.~\eqref{Eq:CloningProtocol} is equivalent to the one found numerically, we compute its Choi state, which is given by (note that every single system is a qubit)
\begin{align}\label{Eq:CloningProtocolChoi}
\CYthree{\wt{\mathcal{M}}_{\rm ABCA'B'C'}^\mathcal{J}} = (\CY{\mathcal{C}_{\rm AC|X}}\otimes\mathcal{I}_{\rm BB'})\left(\proj{\rm GHZ}_{\rm XBB'}\right)\otimes\frac{\id_{\rm A'C'}}{4}.
\end{align}
It remains to compute the output of \CYthree{$\ket{{\rm GHZ}_{\rm XBB'}}$} after the cloning unitary, which reads

\begin{align}
\CYthree{\ket{{\rm GHZ}_{\rm XBB'}}}\otimes\ket{00}_{\rm CM}
&\mapsto\frac{1}{\sqrt{2}}\left[\sqrt{\frac{2}{3}}\ket{000}_{\rm BB'A}\ket{0}_{\rm C}\ket{1}_{\rm M} - \sqrt{\frac{1}{3}}\ket{00}_{\rm BB'}\frac{1}{\sqrt{2}}\left(\ket{10}+\ket{01}\right)_{{\rm AC}}\ket{0}_{\rm M}\right]\nonumber\\
&+\frac{1}{\sqrt{2}}\left[-\sqrt{\frac{2}{3}}\ket{111}_{\rm BB'A}\ket{1}_{\rm C}\ket{0}_{\rm M} + \sqrt{\frac{1}{3}}\ket{11}_{\rm BB'}\frac{1}{\sqrt{2}}\left(\ket{10}+\ket{01}\right)_{{\rm AC}}\ket{1}_{\rm M}\right]\nonumber\\
&=-\sqrt{\frac{1}{12}}\ket{001}_{\rm BB'A}\ket{0}_{\rm C}\ket{0}_{\rm M} + \sqrt{\frac{1}{3}}\left(\ket{000}_{\rm BB'A}+\frac{1}{2}\ket{111}_{\rm BB'A}\right)\ket{0}_{\rm C}\ket{1}_{\rm M}\nonumber\\
&-\sqrt{\frac{1}{3}}\left(\ket{111}_{\rm BB'A}+\frac{1}{2}\ket{000}_{\rm BB'A}\right)\ket{1}_{\rm C}\ket{0}_{\rm M} + \sqrt{\frac{1}{12}}\ket{110}_{\rm BB'A}\ket{1}_{\rm C}\ket{1}_{\rm M}.
\end{align}
Tracing away ${\rm M}$ and tensoring with $\frac{\id_{\rm A'C'}}{4}$, one can check that the resulting state is exactly the same as in Eq.~\eqref{Eq:NumericalGlobalChannel}, verifying the desired claim.

It is worth mentioning that the local noise model can also be found explicitly.
Let $\mathcal{N}_{\rm XB}$'s be the noise channels realizing the value $R(\{\mathcal{M}_{\rm AB},\mathcal{M}_{\rm CB}\}) = 0.75$; i.e., $\{0.75\mathcal{M}_{\rm XB} + 0.25\mathcal{N}_{\rm XB}\}_{\rm X = A,C}\in\mathfrak{C}$.
Then, numerically, we have
\begin{align}
\CYthree{\mathcal{N}_{\rm ABA'B'}^\mathcal{J}} = \frac{\id_{\rm A'}}{2}\otimes\left[\frac{1}{3}\left(\proj{001}+\proj{110}\right) + \frac{1}{6}\left(\proj{000}-\ket{000}\bra{111}-\ket{111}\bra{000}+\proj{111}\right)\right]_{\rm BB'A},
\end{align}
similar construction for \CYthree{$\mathcal{N}_{\rm CBC'B'}^\mathcal{J}$} can be obtained by replacing ${\rm A}$ by ${\rm C}$.
One can observe that
\begin{align}
\mathcal{N}_{\rm AB}(\cdot) = &\frac{2}{3}\left(\mathcal{I}_{\rm A}\otimes\mathcal{D}_{\rm B}\right)\circ{\rm CNOT}_{\rm AB}\left[\proj{1}_{\rm A}\otimes{\rm tr}_{\rm A}(\cdot)\right]+\frac{1}{3}{\rm CNOT}_{\rm AB}^-\left[\proj{0}_{\rm A}\otimes{\rm tr}_{\rm A}(\cdot)\right],
\end{align}
where $\mathcal{D}(\cdot)\coloneqq\proj{0}\cdot\proj{0} + \proj{1}\cdot\proj{1}$ and ${\rm CNOT}_{\rm XB}^-:\ket{i}_{\rm X}\otimes\ket{j}_{\rm B}\mapsto(-1)^j\ket{(i+j)_{\rm mod\;2}}_{\rm X}\otimes\ket{j}_{\rm B}$.
\CYthree{
As a last remark, note that, in a three-qubit setting, $\lambda\proj{\rm GHZ_{XYZ}} + (1-\lambda)\frac{\id_{\rm XYZ}}{8}$ is multipartite entangled if and only if $\lambda>\frac{1}{5}$~\cite{Dur2000,Schack2000}.
This means that the local channels become compatible before losing the ability to maintain multipartite entanglement shared between the local system and an external party (this can be seen by considering noises given by depolarizing channels whose Choi states are $\{\frac{\id_{\rm ABA'B'}}{16},\frac{\id_{\rm BCB'C'}}{16}\}$).
In this sense, channel incompatibility is a more fragile property than multipartite entanglement.
}

\section{CMP Is Irreducible to SMP}\label{App:CMP_Irreducible_to_SMP}
As mentioned in the main text, one may be tempted to conjecture the following:
\begin{center}
{\em Can the CMP be reduced to SMPs for the image states of the local channels?}
\end{center}
In other words, we are asking whether the channel compatibility of $\mathbfcal{E} = \{\mE_{\rm X|X'}\}_{\rm X|X'\in\Lambda}$ can be reduced to the state compatibility of the image states $\{\mE_{\rm X|X'}(\rho_{\rm X'})\}_{\rm X|X'\in\Lambda}$ for every input set consisting of compatible states $\{\rho_{\rm X'}\}$.
We can disprove this conjecture by constructing a family of counterexamples.
To this end, consider a bipartite state $\omega_{\rm XB'}$ satisfying
(1) $\omega_{\rm XB'}$ is not 2-extendible with respect to ${\rm B'}$~\cite{Kaur2021,Kaur2019};
(2) ${\rm tr}_{X}(\omega_{\rm XB'}) = \frac{\id_{\rm B'}}{d_{\rm B'}}$.
Then define a channel $\mathcal{W}_{\rm XB}:{\rm XB\to XB}$ with Choi state
\begin{align}
\mathcal{W}_{\rm XX'BB'}^\mathcal{J} \coloneqq \sigma_{\rm B}\otimes\omega_{\rm XB'}\otimes\frac{\id_{\rm X'}}{d_{\rm X'}},
\end{align}
where $\sigma_{\rm B}$ is a fixed (but arbitrarily chosen) state in ${\rm B}$.
Then we have the following result:

\begin{aproposition}\label{Example}
The channel $\mathcal{W}_{\rm XB}$ satisfies
\begin{enumerate}
\item $\{\mathcal{W}_{\rm AB},\mathcal{W}_{\rm CB}\}$ is locally compatible but incompatible.
\item $\{\mathcal{W}_{\rm AB}(\eta_{\rm AB}),\mathcal{W}_{\rm CB}(\eta_{\rm CB})\}$ is a compatible pair of states for every locally compatible pair of input states $\{\eta_{\rm AB},\eta_{\rm CB}\}$.
\end{enumerate}
\end{aproposition}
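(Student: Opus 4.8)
The plan is to prove the two items separately, exploiting a sharp contrast between how $\mathcal{W}_{\rm XB}$ responds to a maximally entangled probe (the Choi level) and how it acts on an arbitrary input state (the image level). Throughout I take ${\rm X}\in\{{\rm A},{\rm C}\}$ with ${\rm A}\simeq{\rm C}\simeq{\rm X}$, and freely use Lemma~\ref{Lemma:Characterization}.

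\emph{Item 1.} Local compatibility is immediate: by Lemma~\ref{Lemma:Characterization} the Choi state of the reduced channel in ${\rm B|B'}$ of $\mathcal{W}_{\rm XB}$ is ${\rm tr}_{\rm XX'}(\mathcal{W}_{\rm XX'BB'}^\mathcal{J}) = \sigma_{\rm B}\otimes\frac{\id_{\rm B'}}{d_{\rm B'}}$ by condition~(2), the same for ${\rm X}={\rm A}$ and ${\rm X}={\rm C}$; hence both reduce to the constant channel $(\cdot)_{\rm B}\mapsto\sigma_{\rm B}$, so $\{\mathcal{W}_{\rm AB},\mathcal{W}_{\rm CB}\}$ is locally compatible. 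For incompatibility I argue by contradiction: if a global channel $\mathcal{W}_{\rm ABC}$ realized both, Lemma~\ref{Lemma:Characterization} applied to the marginals ${\rm AB|A'B'}$ and ${\rm CB|C'B'}$ would give ${\rm tr}_{\rm C}(\mathcal{W}_{\rm ABCA'B'C'}^\mathcal{J}) = \sigma_{\rm B}\otimes\omega_{\rm AB'}\otimes\frac{\id_{\rm A'C'}}{d_{\rm A'C'}}$ and ${\rm tr}_{\rm A}(\mathcal{W}_{\rm ABCA'B'C'}^\mathcal{J}) = \sigma_{\rm B}\otimes\omega_{\rm CB'}\otimes\frac{\id_{\rm A'C'}}{d_{\rm A'C'}}$. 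Tracing out in addition ${\rm B},{\rm A'},{\rm C'}$, the state $\tau_{\rm ACB'}\coloneqq{\rm tr}_{\rm BA'C'}(\mathcal{W}_{\rm ABCA'B'C'}^\mathcal{J})$ then has both bipartite marginals $\tau_{\rm AB'}=\tau_{\rm CB'}=\omega_{\rm XB'}$; symmetrizing under ${\rm A}\leftrightarrow{\rm C}$ produces a symmetric state on two copies of ${\rm X}$ and one copy of ${\rm B'}$ with both ${\rm X}{\rm B'}$ marginals equal to $\omega_{\rm XB'}$, i.e. a $2$-extension of $\omega_{\rm XB'}$, contradicting condition~(1). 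Hence no such $\mathcal{W}_{\rm ABC}$ exists.

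\emph{Item 2.} I first work out the action of $\mathcal{W}_{\rm XB}$ on an arbitrary input. Since the reference ${\rm X'}$ appears in $\mathcal{W}_{\rm XX'BB'}^\mathcal{J}$ as $\id_{\rm X'}/d_{\rm X'}$ decoupled from everything else, the channel discards the ${\rm X}$-part of its input; since condition~(2) makes $\omega_{\rm XB'}$ a legitimate Choi state, it defines a channel $\Omega\colon{\rm B}\to{\rm X}$. A short Choi-formula computation then gives $\mathcal{W}_{\rm XB}(\eta_{\rm XB}) = \Omega\!\left({\rm tr}_{\rm X}\eta_{\rm XB}\right)_{\rm X}\otimes\sigma_{\rm B}$: a product state depending only on the ${\rm B}$-marginal of the input. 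Consequently, for a locally compatible input pair $\{\eta_{\rm AB},\eta_{\rm CB}\}$, i.e. with common ${\rm B}$-marginal $\xi_{\rm B}\coloneqq{\rm tr}_{\rm A}\eta_{\rm AB}={\rm tr}_{\rm C}\eta_{\rm CB}$, the image states are $\Omega(\xi)_{\rm A}\otimes\sigma_{\rm B}$ and $\sigma_{\rm B}\otimes\Omega(\xi)_{\rm C}$, which are exactly the ${\rm AB}$- and ${\rm CB}$-marginals of the tripartite product state $\Omega(\xi)_{\rm A}\otimes\sigma_{\rm B}\otimes\Omega(\xi)_{\rm C}$; hence they are compatible, proving item~2.

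The heart of the matter --- and the reason the proposition is not vacuous --- is precisely this dichotomy: a maximally entangled probe reads off the genuinely $2$-unextendible correlation $\omega_{\rm XB'}$, which cannot be shared with both ${\rm A}$ and ${\rm C}$, whereas on ordinary inputs $\mathcal{W}_{\rm XB}$ degenerates to $\Omega(\cdot)\otimes\sigma_{\rm B}$ and never certifies incompatibility through image states. Establishing the closed form $\mathcal{W}_{\rm XB}(\eta)=\Omega({\rm tr}_{\rm X}\eta)\otimes\sigma_{\rm B}$ is the only genuine (though routine) computation; the rest is partial traces and symmetrization. I would also record that the family is non-empty: with ${\rm X}\simeq{\rm B'}$ and $\omega_{\rm XB'}$ maximally entangled, both (1) and (2) hold (no-cloning/monogamy), and the choice $\sigma_{\rm B}=\proj{0}$, $\Omega=\mathcal{I}$ recovers the ${\rm SWAP}$-channel $\mathcal{K}_{\rm XB}$ of the main text as a special case; and that $\mathcal{W}_{\rm XB}$ is CPTP, since $\mathcal{W}_{\rm XX'BB'}^\mathcal{J}\ge0$ and ${\rm tr}_{\rm XB}(\mathcal{W}_{\rm XX'BB'}^\mathcal{J})=\id_{\rm X'B'}/d_{\rm X'B'}$ by condition~(2).
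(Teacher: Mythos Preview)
Your proof is correct and follows essentially the same route as the paper's: local compatibility via the common constant marginal $(\cdot)_{\rm B}\mapsto\sigma_{\rm B}$, incompatibility by extracting from the putative global Choi state a $2$-extension of $\omega_{\rm XB'}$, and compatibility of image states via the closed form $\mathcal{W}_{\rm XB}(\eta)=\Omega({\rm tr}_{\rm X}\eta)\otimes\sigma_{\rm B}$ together with the product extension $\Omega(\xi)_{\rm A}\otimes\sigma_{\rm B}\otimes\Omega(\xi)_{\rm C}$. Your version is simply more explicit (the symmetrization step, the CPTP check, and the non-emptiness remark are useful additions the paper leaves implicit).
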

\begin{proof}
First, since locally in ${\rm B}$ both $\mathcal{W}_{\rm AB},\mathcal{W}_{\rm CB}$ are the state preparation channel of $\sigma_{\rm B}$, they are locally compatible.
Now, assume by contradiction that the pair $\{\mathcal{W}_{\rm AB},\mathcal{W}_{\rm CB}\}$ is compatible, then there exists a global channel whose Choi state contains a $2$-extension of the state $\omega_{\rm XB'}$, resulting in a contradiction.
The first claim is proved.
To see the second claim also holds, we note that for any given pair of locally compatible states $\{\eta_{\rm AB},\eta_{\rm CB}\}$ with marginal ${\rm tr}_{\rm A}(\eta_{\rm AB}) = {\rm tr}_{\rm C}(\eta_{\rm CB}) = \kappa$, we have that, for ${\rm X=A,C}$,
\begin{align}
\mathcal{W}_{\rm XB}(\eta_{\rm XB}) = \Omega_{\rm X|B}(\kappa)\otimes\sigma_{\rm B},
\end{align}
where $\Omega_{\rm X|B}$ is a channel from ${\rm B}$ to ${\rm X}$ with the Choi state $\Omega_{\rm XB'}^\mathcal{J} = \omega_{\rm XB'}$.
These two image states are always compatible with the tripartite state $\Omega_{\rm A|B}(\kappa)\otimes\sigma_{\rm B}\otimes\Omega_{\rm C|B}(\kappa)$.
This completes the proof.
\end{proof}
Note that the channel given in the main text
$
\mathcal{K}_{\rm XB}(\cdot)\coloneqq{\rm SWAP}[\proj{0}_{\rm X}\otimes{\rm tr}_{\rm X}(\cdot)]
$
has Choi state 
$
\mathcal{K}_{\rm XX'BB'}^\mathcal{J} = \proj{\Psi^+_{\rm XB'}}\otimes\proj{0}_{\rm B}\otimes\frac{\id_{\rm X'}}{d_{\rm X'}}.
$
Proposition~\ref{Example} implies that $\{\mathcal{K}_{\rm AB},\mathcal{K}_{\rm CB}\}$ is a particular case of a family of counterexamples to the above-mentioned conjecture.
For instance, one can consider two-qubit isotropic states $\rho_{\rm iso}(p)$~\cite{Horodecki1999,Horodecki1999-2} with $p$ strictly higher than the 2-extendible threshold $p=\frac{2}{3}$~\cite{Johnson2013} (see also Lemma 3 in Ref.~\cite{Kaur2021}).
This gives the following one-parameter family of counterexamples for $\frac{2}{3}<p\le1$:
\begin{align}
p\mathcal{K}_{\rm XB}(\cdot) + (1-p)\left(\frac{\id_{\rm X}}{2}\otimes\proj{0}_{\rm B}\right){\rm tr}(\cdot).
\end{align}
Our examples also show that channel incompatibility is {\em not equivalent to} violation of entanglement monogamy~\cite{Coffman2000} for the image states of local channels.
Finally, as another corollary, we have the following result about the resource theory of state unextendibility~\cite{Kaur2021,Kaur2019}:
\begin{acorollary}
The largest set of free operations of resource theory of state unextendibility is not the same with extendible channels.
\end{acorollary}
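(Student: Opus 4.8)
The plan is to exhibit a single channel that preserves the set of $2$-extendible states --- and is therefore a legitimate free operation of the resource theory of state unextendibility --- but that is \emph{not} an extendible channel; this alone shows that the maximal set of free operations strictly contains the extendible channels, hence the two cannot coincide. The natural candidate is the channel $\mathcal{W}_{\rm XB}$ constructed above, and the two properties needed are essentially the two items of Proposition~\ref{Example}. I would first fix the setup: the free states of the theory are the $2$-extendible bipartite states (with respect to the relevant marginal), the ``largest set of free operations'' is by definition the set of all CPTP maps sending free states to free states, and, by Refs.~\cite{Kaur2021,Kaur2019}, every extendible channel lies in this set. It therefore suffices to produce one CPTP map that preserves $2$-extendibility and is not an extendible channel.

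Second, I would argue that $\mathcal{W}_{\rm XB}$ is not an extendible channel. Since $\mathcal{W}_{\rm AB}$ and $\mathcal{W}_{\rm CB}$ are two identical copies of $\mathcal{W}_{\rm XB}$ acting on the isomorphic systems ${\rm A}$ and ${\rm C}$, a $2$-extension of $\mathcal{W}_{\rm XB}$ is precisely the same data as a global channel $\mathcal{E}_{\rm ABC}$ compatible with the pair $\{\mathcal{W}_{\rm AB},\mathcal{W}_{\rm CB}\}$ in the channel-extendibility special case of the CMP; the permutation covariance that appears in the definition of an extendible channel is automatic, since any such $\mathcal{E}_{\rm ABC}$ can be averaged over the swap of ${\rm A}$ and ${\rm C}$ without changing either of its two marginals. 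By Proposition~\ref{Example}(1) this pair is incompatible, so no such global channel exists and $\mathcal{W}_{\rm XB}$ admits no $2$-extension.

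Third, I would show that $\mathcal{W}_{\rm XB}$ does preserve $2$-extendibility. The key point, already used in the proof of Proposition~\ref{Example}, is that $\mathcal{W}_{\rm XB}$ acts as $\rho_{\rm XB}\mapsto\Omega_{\rm X|B}\!\left(\Tr_{\rm X}\rho_{\rm XB}\right)\otimes\sigma_{\rm B}$, so every output is a product state across the cut defining $2$-extendibility, and product states are trivially $2$-extendible. (Equivalently, one may invoke Proposition~\ref{Example}(2) directly: given a $2$-extendible input state and a symmetric extension of it, the two reduced states form a locally compatible input pair, their images form a compatible pair of states, and symmetrizing the compatibilizing tripartite state produces a $2$-extension of the output.) Combining the second and third steps, $\mathcal{W}_{\rm XB}$ belongs to the largest set of free operations but not to the set of extendible channels, which proves the corollary.

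The step I expect to demand the most care is purely one of bookkeeping: one must check that the identification ``incompatible pair $\Leftrightarrow$ channel with no $2$-extension'', together with the averaging-over-swap argument, reproduces verbatim the definition of extendible channels of Refs.~\cite{Kaur2021,Kaur2019} --- in particular, which marginal the state and channel extendibility notions are taken with respect to, so that the non-$2$-extendibility of $\omega_{\rm XB'}$ feeds correctly into Proposition~\ref{Example}(1). Once these conventions are aligned, both halves of the argument are immediate consequences of Proposition~\ref{Example}.
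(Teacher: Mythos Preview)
Your proposal is correct and follows the same route as the paper: the corollary is presented there as an immediate consequence of Proposition~\ref{Example}, with the one-line explanation that ``being resource non-generating is a condition strictly weaker than being extendible for a channel.'' You have simply spelled out what that sentence means --- item~(1) gives non-extendibility of $\mathcal{W}_{\rm XB}$, while item~(2) (or, more directly, the fact that every output is a product state) gives that $\mathcal{W}_{\rm XB}$ is resource non-generating --- and correctly flagged the only nontrivial bookkeeping, namely matching the party with respect to which extendibility is taken in Refs.~\cite{Kaur2021,Kaur2019}.
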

Hence, in the resource theory of state unextendibility, being resource non-generating is a condition strictly weaker than being extendible for a channel.

\section{Proofs of Main Theorems}\label{App:ProofsMainResults}
First, we recall the following decomposition from Proposition 7 in Ref.~\cite{Rosset2018} that will be used in the proof of Theorem~\ref{Coro}:
\begin{atheorem}\label{App:Rosset}{\rm\cite{Rosset2018}}
$W_{\rm AB}$ is a hermitian operator acting on a bipartite system ${\rm AB}$.
Then there exist states $\xi^{(i)}$ in ${\rm A}$ and $\rho^{(j)}$ in ${\rm B}$ and real numbers $\omega_{ij}$ such that
$
W_{\rm AB} = \sum_{i,j}\omega_{ij}\xi^{(i),T}\otimes\rho^{(j),T},
$
where $T$ is the transpose operator.
Moreover, the number of nonzero $\omega_{ij}$ is at most $d_{\rm min}^2 + 3$, where $d_{\rm min}$ is the smallest system dimension among ${\rm A,B}$.
\end{atheorem}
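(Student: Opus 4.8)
The statement is a linear-algebraic fact (Proposition~7 of Ref.~\cite{Rosset2018}); the plan is to reconstruct it constructively, the whole subtlety being to reach the sharp additive count $d_{\rm min}^2+3$ rather than a multiplicative blow-up. First I would dispose of the transpose: since $(\,\cdot\,)^T$ is a real-linear involution with $(\,A\otimes B\,)^T=A^T\otimes B^T$ and the transpose of a density matrix is again a density matrix, it suffices to decompose the Hermitian operator $W_{\rm AB}^T$ as a real combination of product states $\sum_{ij}\omega_{ij}\,\xi^{(i)}\otimes\rho^{(j)}$ and apply $T$ termwise at the end. Existence of \emph{some} finite such decomposition is immediate, because products of Hermitian factors span $\mathcal{H}({\rm AB})$ and every Hermitian operator is a real combination of states; all the work is in the cardinality bound, for which I would strip off the identity components and then exploit a simultaneous perturbation.

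Writing $\xi_0\coloneqq\id_{\rm A}/d_{\rm A}$ and $\rho_0\coloneqq\id_{\rm B}/d_{\rm B}$, I would decompose $W_{\rm AB}^T = \lambda\,\xi_0\otimes\rho_0 + \xi_0\otimes Y_{\rm B} + X_{\rm A}\otimes\rho_0 + W_0$, where $X_{\rm A},Y_{\rm B}$ are traceless and $W_0$ is doubly traceless ($\tr_{\rm A}W_0=0=\tr_{\rm B}W_0$). The doubly traceless part lives in $\mathcal{H}_0({\rm A})\otimes\mathcal{H}_0({\rm B})$, so its operator-Schmidt decomposition $W_0=\sum_{k=1}^{R_0}E_k\otimes F_k$ has \emph{traceless} Hermitian factors and rank $R_0\le\min(d_{\rm A}^2-1,d_{\rm B}^2-1)=d_{\rm min}^2-1$. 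For small $\varepsilon,\delta>0$ the perturbations $\xi^{(k)}\coloneqq\xi_0+\varepsilon E_k$ and $\rho^{(k)}\coloneqq\rho_0+\delta F_k$ are legitimate states (trace one because $E_k,F_k$ are traceless, positive because $\xi_0,\rho_0$ are full rank). Substituting $E_k=\tfrac1\varepsilon(\xi^{(k)}-\xi_0)$, $F_k=\tfrac1\delta(\rho^{(k)}-\rho_0)$ and expanding yields
\begin{equation}
W_0=\frac{1}{\varepsilon\delta}\sum_{k=1}^{R_0}\xi^{(k)}\otimes\rho^{(k)}-\frac{R_0}{\varepsilon\delta}\,\bar\xi\otimes\rho_0-\frac{R_0}{\varepsilon\delta}\,\xi_0\otimes\bar\rho+\frac{R_0}{\varepsilon\delta}\,\xi_0\otimes\rho_0,
\end{equation}
where the crucial collapse is that $\sum_k\xi^{(k)}=R_0\bar\xi$ and $\sum_k\rho^{(k)}=R_0\bar\rho$ are again multiples of states $\bar\xi,\bar\rho$ (a sum of states is a positive operator, hence a scalar times a state), so each mixed sum contracts to a \emph{single} product term.

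It then remains to absorb the leftover identity pieces. I would merge $\xi_0\otimes Y_{\rm B}$, the term $-\tfrac{R_0}{\varepsilon\delta}\xi_0\otimes\bar\rho$, and the scalar $\xi_0\otimes\rho_0$ contribution into one $\xi_0\otimes Z_{\rm B}$ with $Z_{\rm B}$ Hermitian, and symmetrically $X_{\rm A}\otimes\rho_0$ together with $-\tfrac{R_0}{\varepsilon\delta}\bar\xi\otimes\rho_0$ into $Z_{\rm A}\otimes\rho_0$. Finally each single Hermitian operator $Z_{\rm A},Z_{\rm B}$ is written as its normalized positive part minus its normalized negative part, i.e. a real combination of at most two states. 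Counting: $R_0$ diagonal terms $\xi^{(k)}\otimes\rho^{(k)}$, at most two for $\xi_0\otimes Z_{\rm B}$ and two for $Z_{\rm A}\otimes\rho_0$, gives at most $R_0+4\le d_{\rm min}^2+3$ product states, which is the claimed bound; relabelling the product pairs as $\xi^{(i)}\otimes\rho^{(j)}$ fixes the $\omega_{ij}$, and transposing recovers the stated form. The main obstacle, and the reason the bound is additive rather than $\sim d_{\rm min}^4$ (a naive double Carath\'eodory in $\mathcal{H}_0({\rm A})\otimes\mathcal{H}_0({\rm B})$) or $\sim 2d_{\rm min}^2$ (splitting each Schmidt factor independently into positive and negative states), is engineering this collapse: it hinges on using traceless operator-Schmidt factors \emph{after} the identity split so that the two mixed sums each reduce to one term. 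This is the step I would verify most carefully, together with the choice of $\varepsilon,\delta$ small enough that all $\xi^{(k)},\rho^{(k)}$ remain valid states.
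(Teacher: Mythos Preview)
The paper does not actually prove this statement; it merely recalls it from Proposition~7 of Ref.~\cite{Rosset2018} as an input to the proof of Theorem~\ref{Coro}. So there is no in-paper proof to compare against.

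That said, your reconstruction is correct. The identity-stripping $W_{\rm AB}^T=\lambda\,\xi_0\otimes\rho_0+\xi_0\otimes Y_{\rm B}+X_{\rm A}\otimes\rho_0+W_0$ is the right move, and since $W_0$ lives in $\mathcal{H}_0({\rm A})\otimes\mathcal{H}_0({\rm B})$ its Hermitian operator-Schmidt rank is indeed bounded by $d_{\rm min}^2-1$. Your expansion of $\sum_k\xi^{(k)}\otimes\rho^{(k)}$ checks out line by line, and the key collapse---that $\sum_k\xi^{(k)}$ and $\sum_k\rho^{(k)}$ are each a positive multiple of a single state---is exactly what makes the count additive rather than multiplicative. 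After merging the residual $\xi_0\otimes(\cdot)$ and $(\cdot)\otimes\rho_0$ pieces and writing each Hermitian $Z_{\rm A},Z_{\rm B}$ as a difference of two scaled states, the total is $R_0+4\le d_{\rm min}^2+3$, as claimed. The only places to be careful are (i) choosing $\varepsilon,\delta$ small enough that every $\xi^{(k)},\rho^{(k)}$ stays positive, which is automatic since $\xi_0,\rho_0$ are full rank, and (ii) the degenerate cases $Z_{\rm A}=0$ or $Z_{\rm B}=0$, which only help the count.
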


\subsection{Proof of Theorem~\ref{Coro}}\label{App:Proof-Coro}
\begin{proof}
We start from Eq.~\eqref{Eq:Result:CompatibilityProblemCondition} in the main text and apply Theorem~\ref{App:Rosset}.
For every $\mathbfcal{E}$ and $\{H_{\rm X|X'}\}_{{\rm X|X'\in\Lambda}}$ with $H_{\rm X|X'}^\dagger = H_{\rm X|X'}$ in the system ${\rm XX'}$, Theorem~\ref{App:Rosset} implies the existences of states $\xi^{(i)}_{{\rm X|X'}}$ in ${\rm X}$, $\rho^{(j)}_{{\rm X|X'}}$ in ${\rm X'}$, and real numbers $\{\omega^{(ij)}_{\rm X|X'}\}_{i,j=1}^{N_{\rm X|X'}}$ [we can choose $N_{\rm X|X'} \le \left(\min\{d_{\rm X},d_{\rm X'}\}\right)^2 + 3$ for every ${\rm X|X'\in\Lambda}$], such that 
\begin{align}\label{Eq:CompatibleSuffCondition}
\sum_{\rm X|X'\in\Lambda}{\rm tr}\left(H_{\rm X|X'}\mathcal{E}_{\rm XX'}^\mathcal{J}\right)&=\sum_{\rm X|X'\in\Lambda}\sum_{i,j=1}^{N_{\rm X|X'}}\omega^{(ij)}_{{\rm X|X'}}{\rm tr}\left[\left(\xi^{(i),T}_{{\rm X|X'}}\otimes\rho^{(j),T}_{{\rm X|X'}}\right)(\mE_{\rm X|X'}\otimes\mathcal{I}_{\rm X'})\left(\proj{\Psi_{\rm X'X'}^+}\right)\right]\nonumber\\
&=\sum_{\rm X|X'\in\Lambda}\sum_{i,j=1}^{N_{\rm X|X'}}\frac{\omega^{(ij)}_{{\rm X|X'}}{d_{\rm X'}}}{\rm tr}\left[\xi^{(i),T}_{{\rm X|X'}}\mE_{\rm X|X'}\left(\rho^{(j)}_{{\rm X|X'}}\right)\right]=\sum_{\rm X|X'\in\Lambda}\sum_{j=1}^{N_{\rm X|X'}}{\rm tr}\left[E^{(j)}_{{\rm X|X'}}\mE_{\rm X|X'}\left(\rho^{(j)}_{{\rm X|X'}}\right)\right],
\end{align}
where $
E^{(j)}_{{\rm X|X'}}\coloneqq\sum_{i=1}^{N_{\rm X|X'}}\frac{\omega^{(ij)}_{{\rm X|X'}}}{d_{\rm X'}}\xi^{(i),T}_{{\rm X|X'}}$,
which is a Hermitian operator in ${\rm X}$.
Using Eq.~\eqref{Eq:Result:CompatibilityProblemCondition}, and noticing that for every ${\rm X|X'}$ we have that $N_{\rm X|X'}\le\left(\min\{d_{\rm X},d_{\rm X'}\}\right)^2 + 3\le \left(\max_{\rm X|X'\in\Lambda}\{d_{\rm X},d_{\rm X'}\}\right)^2 +3$, the result follows by adding zero operators (which is Hermitian) to the set $\left\{E^{(j)}_{{\rm X|X'}}\right\}$.
\end{proof}

\subsection{Proof of Theorem~\ref{Result:EnsembleStateDiscrimination}}\label{App:Proof-Result:EnsembleStateDiscrimination}
Rather than proving the theorem directly, we show the following result that has Theorem~\ref{Result:EnsembleStateDiscrimination} as a direct corollary (note that, just like in the main paper, the subscripts of the following Hermitian operators and local states are showing the dependency on the output-input pair ${\rm X|X'}$ rather than the system they belong to):
\newpage
\begin{atheorem}\label{Result:Witness-DiscriminationTask}
For every $\mathbfcal{E}$, the following two statements are equivalent:
\begin{enumerate}
\item\label{Eq:ForCoro} For every \CY{${\rm X|X'}\in\Lambda$} there exist Hermitian operators \CY{$\left\{H^{(i)}_{{\rm X|X'}}\right\}_{i=1}^{N}$ in ${\rm X}$} and states \CY{$\left\{\rho^{(i)}_{{\rm X|X'}}\right\}_{i=1}^{N}$ in ${\rm X'}$}, where $N\in\mathbb{N}$ \CY{is independent of ${\rm X|X'}$}, such that
\begin{align}
\sum_{{\rm X|X'}\in\Lambda}\sum_{i=1}^{N}{\rm tr}\left[H^{(i)}_{{\rm X|X'}}\mE_{\rm X|X'}\left(\rho^{(i)}_{{\rm X|X'}}\right)\right] > \max_{\mathbfcal{L}\in\mathfrak{C}}\sum_{{\rm X|X'}\in\Lambda}\sum_{i=1}^{N}{\rm tr}\left[H^{(i)}_{{\rm X|X'}}\mathcal{L}_{\rm X|X'}\left(\rho^{(i)}_{{\rm X|X'}}\right)\right].
\end{align}
\item\label{Eq:GeneralDiscriminationTask} There exists a strictly positive $D$ such that
$
P(D,\mathbfcal{E}) > P_{\mathfrak{C}}(D).
$
\end{enumerate}
\end{atheorem}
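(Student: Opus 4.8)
\emph{Proof strategy.} The plan is to prove the two implications separately; $(\ref{Eq:GeneralDiscriminationTask})\Rightarrow(\ref{Eq:ForCoro})$ is a direct rewriting, while $(\ref{Eq:ForCoro})\Rightarrow(\ref{Eq:GeneralDiscriminationTask})$ is where the work lies. For the easy direction, given a strictly positive $D=\left(\{p_{\rm X|X'}\},\{q^{(i)}_{\rm X|X'},\rho^{(i)}_{\rm X|X'}\},\{M^{(i)}_{\rm X|X'}\}\right)$ with $P(D,\mathbfcal{E})>P_\mathfrak{C}(D)$, I would set $H^{(i)}_{\rm X|X'}\coloneqq p_{\rm X|X'}q^{(i)}_{\rm X|X'}M^{(i)}_{\rm X|X'}$, which are positive (hence Hermitian) operators in ${\rm X}$, and keep the states $\rho^{(i)}_{\rm X|X'}$. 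Then $P(D,\mathbfcal{L})=\sum_{{\rm X|X'},i}\tr[H^{(i)}_{\rm X|X'}\mathcal{L}_{\rm X|X'}(\rho^{(i)}_{\rm X|X'})]$ for every vector of channels $\mathbfcal{L}$, so the assumed strict inequality is precisely statement~(\ref{Eq:ForCoro}); if the ensemble sizes differ across pairs, pad the short ones with extra indices carrying $H^{(i)}_{\rm X|X'}=0$ and an arbitrary state, so that $N$ can be chosen independent of ${\rm X|X'}$.

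For the converse, I would convert the separating functional of~(\ref{Eq:ForCoro}) into a genuine strictly positive task in two steps. \emph{Step 1 (reduce to positive operators).} Replacing $H^{(i)}_{\rm X|X'}$ by $H^{(i)}_{\rm X|X'}+c^{(i)}_{\rm X|X'}\id_{\rm X}$ with $c^{(i)}_{\rm X|X'}$ large enough shifts \emph{both} sides of the inequality in~(\ref{Eq:ForCoro}) by the same constant $\sum_{{\rm X|X'},i}c^{(i)}_{\rm X|X'}$, since every channel is trace preserving and the $\rho^{(i)}_{\rm X|X'}$ are normalized; hence one may assume all $H^{(i)}_{\rm X|X'}>0$, keeping the gap $\delta\coloneqq(\mathrm{LHS})-(\mathrm{RHS})>0$. \emph{Step 2 (build the task).} Put $G_{\rm X|X'}\coloneqq\sum_{i=1}^N H^{(i)}_{\rm X|X'}>0$, fix $s>0$ with $sN\sum_{{\rm X|X'}}\|G_{\rm X|X'}\|<1$, take $q^{(i)}_{\rm X|X'}\coloneqq 1/N$, choose probabilities $p_{\rm X|X'}>sN\|G_{\rm X|X'}\|$ summing to one (feasible by the choice of $s$), and define the POVM $M^{(i)}_{\rm X|X'}\coloneqq (sN/p_{\rm X|X'})H^{(i)}_{\rm X|X'}>0$ for $i\le N$ together with one surplus outcome $M^{(N+1)}_{\rm X|X'}\coloneqq\id_{\rm X}-\sum_{i\le N}M^{(i)}_{\rm X|X'}>0$ that carries no ensemble state; retain the states $\rho^{(i)}_{\rm X|X'}$, $i\le N$. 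This $D$ is strictly positive, and since $p_{\rm X|X'}q^{(i)}_{\rm X|X'}M^{(i)}_{\rm X|X'}=s\,H^{(i)}_{\rm X|X'}$, one gets $P(D,\mathbfcal{L})=s\sum_{{\rm X|X'},i}\tr[H^{(i)}_{\rm X|X'}\mathcal{L}_{\rm X|X'}(\rho^{(i)}_{\rm X|X'})]$ for every $\mathbfcal{L}$; maximizing over $\mathbfcal{L}\in\mathfrak{C}$ and invoking~(\ref{Eq:ForCoro}) gives $P(D,\mathbfcal{E})=s\,(\mathrm{LHS})>s\,(\mathrm{RHS})=P_\mathfrak{C}(D)$.

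The main obstacle is exactly this last construction. Identifying $H^{(i)}_{\rm X|X'}$ directly with $p_{\rm X|X'}q^{(i)}_{\rm X|X'}M^{(i)}_{\rm X|X'}$ for a \emph{complete} POVM would force $\sum_i H^{(i)}_{\rm X|X'}$ to be proportional to $\id_{\rm X}$, which in general fails; the fix is to allow a surplus measurement outcome that is not linked to any state (so it does not enter $P$) and to absorb the ensuing slack $\id_{\rm X}-\sum_{i\le N}M^{(i)}_{\rm X|X'}$ there, at the harmless cost of a uniform rescaling by $s$. (If one insists that outcomes be in bijection with states, the same works by attaching to $M^{(N+1)}_{\rm X|X'}$ a dummy state with vanishingly small probability, using continuity of the gap.) One also checks that the maximum over $\mathbfcal{L}\in\mathfrak{C}$ is attained, since the set of Choi states of compatible channels is compact — cut out by the linear conditions of Eq.~\eqref{Eq:CMPSDP} — and that $\mathfrak{C}\neq\emptyset$. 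Finally, Theorem~\ref{Result:EnsembleStateDiscrimination} is an immediate corollary: by Theorem~\ref{Coro}, $\mathbfcal{E}$ is incompatible iff~(\ref{Eq:ForCoro}) holds (with $N=N_\Lambda$), whereas~(\ref{Eq:ForCoro}) clearly forces incompatibility (otherwise $\mathbfcal{E}$ would be an admissible $\mathbfcal{L}$ and violate the strict inequality), so combining with the equivalence proved above yields incompatibility $\Leftrightarrow$ existence of a strictly positive $D$ with $P(D,\mathbfcal{E})>P_\mathfrak{C}(D)$.
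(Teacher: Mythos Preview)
Your proposal is correct and follows essentially the same route as the paper: for $(\ref{Eq:GeneralDiscriminationTask})\Rightarrow(\ref{Eq:ForCoro})$ you absorb the task data into the Hermitian operators; for $(\ref{Eq:ForCoro})\Rightarrow(\ref{Eq:GeneralDiscriminationTask})$ you shift by multiples of the identity to make the witness operators strictly positive (using trace preservation), complete them to a POVM via a surplus element, and recover strict positivity through a small-probability dummy state.

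One point to tighten: your main Step~2 construction, as written, produces a POVM with $N{+}1$ elements but only $N$ states with $q^{(i)}_{\rm X|X'}=1/N$ summing to one. Under the paper's definition of $D$, states and POVM elements share the same index set and strict positivity demands $q^{(N+1)}_{\rm X|X'}>0$, so the ``surplus outcome that carries no ensemble state'' is not yet a strictly positive $D$. Your parenthetical fix---attaching a dummy state with weight $\epsilon$ and invoking continuity of the gap---is exactly right and is precisely what the paper does explicitly: it sets $q^{(i)}_{\rm X|X'}=(1-\epsilon)/N$ for $i\le N$, $q^{(N+1)}_{\rm X|X'}=\epsilon$, writes $P(D,\mathbfcal{N})=\wt{P}(\mathbfcal{N})+\epsilon\,\Gamma(\mathbfcal{N})$, and chooses $\epsilon$ small enough that the bounded $\epsilon$-term cannot overturn the gap $\Delta>0$ coming from~(\ref{Eq:ForCoro}). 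So the parenthetical is not an optional alternative but the actual construction; you should promote it into the main line of the argument. The remaining difference from the paper is purely cosmetic: you absorb the rescaling into non-uniform $p_{\rm X|X'}$, whereas the paper keeps $p_{\rm X|X'}=1/|\Lambda|$ uniform and uses a single global scalar $\kappa$.
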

\begin{proof}
First, we note that \CYtwo{Statement~\ref{Eq:GeneralDiscriminationTask}} implies \CYtwo{Statement~\ref{Eq:ForCoro}} with the Hermitian operator \CY{$H^{(i)}_{{\rm X|X'}}\coloneqq p_{\rm X|X'}q^{(i)}_{{\rm X|X'}}M^{(i)}_{{\rm X|X'}}\ge0$ \CYtwo{(one may also need to add zero operators to make the range of $i$'s independent of ${\rm X|X'}$)}.}
So it remains to show that \CYtwo{Statement~\ref{Eq:GeneralDiscriminationTask}} holds if \CYtwo{Statement~\ref{Eq:ForCoro}} is true.
As the first step, we note that \CYtwo{Statement~\ref{Eq:ForCoro}} holds if and only if 
\begin{align}
\CY{\sum_{{\rm X|X'}\in\Lambda}\sum_{i=1}^{N}{\rm tr}\left[\kappa\times(H^{(i)}_{{\rm X|X'}} + \Delta^{(i)}_{{\rm X|X'}}\id_{\rm X})\mE_{\rm X|X'}\left(\rho^{(i)}_{{\rm X|X'}}\right)\right]>\max_{\mathbfcal{L}\in\mathfrak{C}}\sum_{{\rm X|X'}\in\Lambda}\sum_{i=1}^{N}{\rm tr}\left[\kappa\times(H^{(i)}_{{\rm X|X'}} + \Delta^{(i)}_{{\rm X|X'}}\id_{\rm X})\mathcal{L}_{\rm X|X'}\left(\rho^{(i)}_{{\rm X|X'}}\right)\right],}
\end{align}
for every $\kappa>0$ and real numbers \CY{$\{\Delta^{(i)}_{{\rm X|X'}}\}$}.
To see that this is the case, it suffices to notice that, since \CY{$\mE_{\rm X|X'}$'s} are trace-preserving, \CY{$\sum_{i,{\rm X|X'}}{\rm tr}\left[\Delta^{(i)}_{{\rm X|X'}}\id_{\rm X}\times\mE_{\rm X|X'}\left(\rho^{(i)}_{{\rm X|X'}}\right)\right] = \sum_{i,{\rm X|X'}}\Delta^{(i)}_{{\rm X|X'}}$} is a fixed real number, and adding a fixed real number on both sides of \CY{the inequality in Statement~\ref{Eq:ForCoro}} preserves the inequality.
\CY{Let $Z^{(i)}_{{\rm X|X'}}\coloneqq\kappa\left(H^{(i)}_{{\rm X|X'}} + \Delta^{(i)}_{{\rm X|X'}}\id_{\rm X}\right)$, which is a Hermitian operator in ${\rm X}$ with dependency on \CYtwo{${\rm X|X'}$}}, we can choose $\kappa,\CY{\Delta^{(i)}_{{\rm X|X'}}}$ such that 
\CYtwo{
$
Z^{(i)}_{{\rm X|X'}}>0\;\;\forall\,i\;\&\;{\rm X|X'}$ and 
$
\sum_{i=1}^{N}Z^{(i)}_{{\rm X|X'}} < \id_{\rm X}\;\;\forall\,{\rm X|X'}.
$
Hence, for each ${\rm X|X'}$, $\{Z^{(i)}_{{\rm X|X'}}\}_{i=1}^{N}$} can be interpreted as part of a POVM.
Also, \CYtwo{Statement~\ref{Eq:ForCoro}} implies that there exists a set of states \CY{$\{\rho^{(i)}_{{\rm X|X'}}\}$} such that
\begin{align}\label{Eq:ComputationDetailDelta}
\CY{\sum_{{\rm X|X'}\in\Lambda}\sum_{i=1}^{N}{\rm tr}\left[Z^{(i)}_{{\rm X|X'}}\mE_{\rm X|X'}\left(\rho^{(i)}_{{\rm X|X'}}\right)\right]>\max_{\mathbfcal{L}\in\mathfrak{C}}\sum_{{\rm X|X'}\in\Lambda}\sum_{i=1}^{N}{\rm tr}\left[Z^{(i)}_{{\rm X|X'}}\mathcal{L}_{\rm X|X'}\left(\rho^{(i)}_{{\rm X|X'}}\right)\right].}
\end{align}
Now, consider the ensemble state discrimination task \CY{$D = \left(\{p_{\rm X|X'}\}, \{q^{(i)}_{{\rm X|X'}},\sigma^{(i)}_{{\rm X|X'}}\}, \{M^{(i)}_{{\rm X|X'}}\}\right)$} given by (\CY{what follows holds for every ${\rm X|X'}$; note again that the subscript now denotes the dependency on ${\rm X|X'}$ rather than the systems they live in}):
\begin{align}
&\CY{p_{\rm X|X'}} = \frac{1}{|\Lambda|};\\
&\CY{q^{(i)}_{{\rm X|X'}} = \frac{1-\epsilon}{N}\quad\;{\rm if}\;i=1,...,N\quad\&\quad q^{(N+1)}_{{\rm X|X'}} = \epsilon;}\\
&\CY{\sigma^{(i)}_{{\rm X|X'}} = \rho^{(i)}_{{\rm X|X'}}\quad\;{\rm if}\;i=1,...,N\quad\&\quad \sigma^{(N+1)}_{{\rm X|X'}} = \eta_{\rm X|X'};}\\
&\CY{M^{(i)}_{{\rm X|X'}} = Z^{(i)}_{{\rm X|X'}}\quad\;{\rm if}\;i=1,...,N\quad\&\quad M^{(N+1)}_{{\rm X|X'}} = \id_{\rm X} - \sum_{i=1}^{N}M^{(i)}_{{\rm X|X'}},}
\end{align}
where $\epsilon\in[0,1]$ is a real number whose range will be set later, and \CY{$\eta_{\rm X|X'}$'s are states in ${\rm X'}$} that can be chosen arbitrarily \CY{(but they still depend on ${\rm X|X'}$)}.
Then, according to the setting, \CY{$\{M^{(i)}_{{\rm X|X'}}\}_{i=1}^{N+1}$ is a POVM in the output system ${\rm X}$ for every ${\rm X|X'\in\Lambda}$}, which means $D$ is included in ensemble discrimination tasks.
Also note that $D$ is strictly positive once $0<\epsilon<1$.

For any set of channels \CY{$\mathbfcal{N}=\{\mathcal{N}_{\rm X|X'}\}_{\rm X|X'\in\Lambda}$}, we write
\CYtwo{
$
P(D,\mathbfcal{N}) = \wt{P}(\mathbfcal{N})+\epsilon\times\Gamma(\mathbfcal{N}),
$
}
where
\begin{align}
&\wt{P}(\mathbfcal{N})\coloneqq \frac{1}{N|\Lambda|}\CY{\sum_{{\rm X|X'\in\Lambda}}\sum_{i=1}^{N}{\rm tr}\left[Z^{(i)}_{{\rm X|X'}}\mathcal{N}_{\rm X|X'}\left(\rho^{(i)}_{{\rm X|X'}}\right)\right];}\\
&\Gamma(\mathbfcal{N})\coloneqq\CY{\frac{1}{|\Lambda|}\sum_{{\rm X|X'\in\Lambda}}\left({\rm tr}\left[\left(\id_{\rm X} - \sum_{i=1}^{N}Z^{(i)}_{{\rm X|X'}}\right)\mathcal{N}_{\rm X|X'}(\eta_{\rm X|X'})\right] - \frac{1}{N}\sum_{i=1}^{N}{\rm tr}\left[Z^{(i)}_{{\rm X|X'}}\mathcal{N}_{\rm X|X'}(\rho^{(i)}_{{\rm X|X'}})\right]\right).}
\end{align}
Note that Eq.~\eqref{Eq:ComputationDetailDelta} implies that $\wt{P}(\mathbfcal{E})=\Delta+\max_{\mathbfcal{L}\in\mathfrak{C}}\wt{P}(\mathbfcal{L})$ for some $\Delta>0$.
From here we conclude that:
\begin{align}
\max_{\mathbfcal{L}\in\mathfrak{C}}P(D,\mathbfcal{L}) &\le \max_{\mathbfcal{L}\in\mathfrak{C}}\wt{P}(\mathbfcal{L}) + \epsilon\times\max_{\mathbfcal{L}\in\mathfrak{C}}\Gamma(\mathbfcal{L})=\wt{P}(\mathbfcal{E})-\Delta+\epsilon\times\max_{\mathbfcal{L}\in\mathfrak{C}}\Gamma(\mathbfcal{L})=P(D,\mathbfcal{E})-\Delta+\epsilon\times\left[\max_{\mathbfcal{L}\in\mathfrak{C}}\Gamma(\mathbfcal{L}) - \Gamma(\mathbfcal{E})\right],
\end{align}
Set $\Delta'\coloneqq\max_{\mathbfcal{L}\in\mathfrak{C}}\Gamma(\mathbfcal{L}) - \Gamma(\mathbfcal{E})$, which is finite since $\Gamma$ is bounded in the set of all channels.
Then
\CYtwo{
$
P_\mathfrak{C}(D)\coloneqq\max_{\mathbfcal{L}\in\mathfrak{C}}P(D,\mathbfcal{L})\le P(D,\mathbfcal{E})-\Delta+\epsilon\Delta'.
$
}
If $\Delta'\le0$, then $P_\mathfrak{C}(D)<P(D,\mathbfcal{E})$ $\forall\epsilon\in[0,1]$.
If $\Delta'>0$, take $\epsilon<\min\left\{\frac{\Delta}{\Delta'},1\right\}$ which again gives $P_\mathfrak{C}(D)<P(D,\mathbfcal{E})$.
\end{proof}
\CYthree{
As a remark, such operational advantages can be extended to a general dynamical resource theory setups recently investigated in, e.g., Ref.~\cite{Theurer2019,Rosset2018,LiuWinter2019,LiuYuan2019,Takagi2019-3,Hsieh2020-2,Hsieh2020-3}.
See Ref.~\cite{HsiehDRT} for further details.
}

\end{document}